\documentclass{llncs}

% NUMBERING CHANGED IN THE .cls FILE
% FOR FINAL VERSION, CHANGE HERE!

\usepackage{url}
\usepackage[pdftex]{graphicx}
\usepackage{amsmath,amssymb,xcolor,wrapfig,paralist}
\usepackage{algorithmicx,algorithm}
\usepackage[noend]{algpseudocode}
\usepackage{multirow}

\usepackage{booktabs}
\usepackage{proof}
\usepackage{tikz}
\usepackage{colortbl}
\usepackage{pdflscape}
\pagestyle{plain}

%\makeatletter
%\RequirePackage[bookmarks,unicode,colorlinks=true]{hyperref}%
%   \def\@citecolor{blue}%
%   \def\@urlcolor{blue}%
%   \def\@linkcolor{blue}%
%\def\UrlFont{\rmfamily}
%\def\orcidID#1{\smash{\href{http://orcid.org/#1}{\protect\raisebox{-1.25pt}{\protect\includegraphics{orcid_color.eps}}}}}
%\makeatother

\usetikzlibrary{shapes,positioning,arrows,calc,automata,matrix}
\tikzstyle{state}+=[minimum size = 8mm, inner sep=0,outer sep=1]
\tikzset{->,>=stealth'}

\definecolor{wwhite}{gray}{1}
\usepackage[T1]{fontenc}

\usepackage{tabularx}
\newcolumntype{L}{>{\raggedright\arraybackslash}p{1.6cm}}
\newcolumntype{C}{>{\centering\arraybackslash}p{1.6cm}}
\newcolumntype{R}[1]{>{\raggedleft\arraybackslash}p{#1}}
\allowdisplaybreaks

\newcommand{\thmhelperpre}[2]{\newcommand{\theoremlike}[1]{\par\medskip\penalty-250\refstepcounter{theorem}{\bfseries\noindent##1 \ref{#1}.}\itshape}\theoremlike{#2}}
\newcommand{\thmhelperpost}{\par\medskip%
 \renewcommand{\theoremlike}[1]{\par\medskip\penalty-250\refstepcounter{theorem}{\bfseries\noindent##1 \thesection .\thetheorem.}\itshape}%
}

%%%%%%%%%%%%%%%%%% todo's

%\renewcommand{\todo}[1]{} %%%%%%%%%%%%%%%%%%%%%%%%%%%%%%% ONLY HERE TO SWITCH TODO'S OFF

%%%%%%%%%%%%%%%%%%%%%%%%%%%%%%%%%%%%%%%%%%%%%%%%%%%%%%%%%%%%%%%%%%%%%%%%%%%%%%%%%%%%%%%%%%%%%%

 % quod erat exemplandum

% MACROS %
\newcommand{\rc}[1]{#1}%{\textcolor{red}{#1}}

\newcommand{\pr}{\mathbb P}
\newcommand{\expected}{\mathbb E}

\newcommand{\pmin}{\mathsf{p}_{\mathsf{min}}}
\newcommand{\pvarphi}{\mathsf{p}_{\varphi}}

\newcommand{\Nset}{\mathbb N}
\newcommand{\trans}[1]{\xrightarrow{#1}}       %% Transition relation

% DTMC
\newcommand{\Mc}{\mathcal{M}}
\newcommand{\Pm}{\mathbf{P}}
\newcommand{\St}{S}
\newcommand{\init}{\mu}
\newcommand{\Lab}{L}

% DRA
\newcommand{\dra}{\mathcal{A}}
\newcommand{\Mcdra}{\Mc \otimes \dra}
\newcommand{\draS}{Q}
\newcommand{\draAl}{{2^{Ap}}}
\newcommand{\draTr}{\gamma}
\newcommand{\draInit}{q_o}
\newcommand{\draAcc}{Acc}

\newcommand{\scc}{\mathsf{SCC}}
\newcommand{\bscc}{\mathsf{BSCC}}

\newcommand{\scs}{\mathsf{SC}}
\newcommand{\run}{\rho}
\renewcommand{\path}{\pi}
\newcommand{\emptypath}{\lambda}
\newcommand{\concat}{\,.\,}
\newcommand{\runs}{\mathsf{Runs}}
\newcommand{\candidatepath}{\kappa}
\newcommand{\candidate}{K}
\newcommand{\strength}{\textsc{Str}}
\newcommand{\kcand}[1]{\mathit{WCand}_{#1}}
\newcommand{\skcand}[1]{\mathit{Cand}_{#1}}
\newcommand{\support}[1]{\overline{#1}}

\newcommand{\bad}{\textsc{Bad}}
\newcommand{\tup}[1]{\langle #1 \rangle}

\newcommand{\F}{{\ensuremath{\mathbf{F}}}}
\newcommand{\G}{{\ensuremath{\mathbf{G}}}}

\newcommand{\cone}{\mathsf{Cone}}
\newcommand{\nextState}{\mathsf{NextState}}

\newcommand{\trerr}[1]{\xi}
\newcommand{\mperr}[1]{\zeta}

\newcommand{\maxsizescc}{\mathsf{{\scriptsize mxsc}}}

\newcommand{\rem}[1]{}

\newcommand{\mybirthday}{\mathit{Birthday}}
\newcommand{\myvisits}{\mathit{Visits}}

\newcommand{\etal}{\textit{et al.\ }}

\newtheorem{fact}{Fact}

%%%%%%%%%%%%%%%%%%%%%%%%%%%%%%%%%%%%%%%%%%%%%%%%%%%%%%%%

\usepackage[T1]{fontenc}
\usepackage{lmodern}
\usepackage{microtype}
\newcommand{\myspace}{\vspace*{-0.5em}}

 \advance\textwidth2mm % <= 5mm
 \advance\hoffset-1mm
% \advance\textheight2mm
% \advance\voffset-1mm

%%%%%%%%%%%%%%%%%%%%%%%%%%%%%%%%%%%%%%%%%%%%%%%%%%%%%%%%

\title{Online Monitoring $\omega$-Regular Properties in Unknown Markov Chains}
\author{Javier Esparza\inst{1} \and
Stefan Kiefer\inst{2} \and
Jan K\v{r}et{\'i}nsk{\'y} \inst{1}\and
Maximilian Weininger\inst{1}}
\institute{
Technische Universit\"at M\"unchen \and University of Oxford
}

\begin{document}

\pagestyle{plain}
\maketitle

\begin{abstract}
We study runtime monitoring of $\omega$-regular properties. We consider a simple setting in which  a run of an unknown finite-state Markov chain $\Mc$ is monitored against a fixed but arbitrary $\omega$-regular specification $\varphi$. The purpose of monitoring is to keep aborting runs that are ``unlikely'' to satisfy the specification until $\Mc$ executes  a correct run. We design controllers for the reset action that (assuming that $\varphi$ has positive probability) satisfy the following property w.p.1: the number of resets is finite, and the run executed by $\Mc$ after the last reset satisfies $\varphi$.
\end{abstract}

\section{Introduction}
Runtime verification, also called runtime monitoring, is the problem of checking at runtime whether an execution of a system satisfies a given correctness property (see e.g. \cite{HavelundR04a,LeuckerS09,FalconeHR13,BartocciFFR18}). It can be used to automatically evaluate test runs, or to steer the application back to some safety region if a property is violated. Runtime verification of LTL or $\omega$-regular properties has been thoroughly studied \cite{BauerLS06,LeuckerS09,BauerLS11,BartocciBNR18}. It is conducted by automatically translating the property into a monitor that inspects the execution online in an incremental way, and (in the most basic setting) outputs ``yes'', ``no'', or ``unknown'' after each step. 
%An LTL property is \emph{monitorable} if there exists a monitor that eventually outputs ``yes'' or ``no'' \cite{}. 
A fundamental limitation of runtime verification is that, if the system is not known \textit{a priori}, then many properties, like for example $\G\F p$ or $\F\G p$, are not monitorable. Loosely speaking, since every finite execution can be extended to a run satisfying $\G\F p$ and to  another run satisfying its negation, monitors can only continuously answer ``unknown'' (see \cite{BartocciBNR18} for a more detailed discussion). Several approaches to this problem have been presented, which modify the semantics of LTL in different ways to refine the prediction and palliate the problem \cite{PnueliZ06,BauerLS07,BauerLS10,MorgensternGS12,ZhangLD12,BartocciBNR18}, but the problem is of fundamental nature.

Runtime monitoring of stochastic systems modeled as Hidden Markov Chains (HMM) has been studied by Sistla \etal in a number of papers \cite{SistlaS08,GondiPS09,SistlaZF11}. Given a HMM $H$ and an $\omega$-regular language $L$, % and an arbitrary $0 \leq  p < 1$, 
these works construct a monitor that (a) rejects executions of $H$ not in $L$ w.p.1, and (b) accepts executions of $H$ in $L$ with positive probability (this is called a \emph{strong monitor} in \cite{SistlaS08}). Observe, however, that the monitor knows $H$ in advance. The case where $H$ is not known in advance is also considered in \cite{SistlaS08}, but in this case strong monitors only exist for languages recognizable by deterministic, possibly infinite state, B\"{u}chi automata. Indeed, it is easy to see that, for example, a monitor that has no information about the HMM cannot be strong for a property like $\G\F p$.

Summarizing, the work of Sistla \etal seems to indicate that one must either know the HMM in advance, or has to give up monitorability of liveness properties.  In this paper we leverage a technique introduced in \cite{DacaHKP17} to show that there is a third way: Assume that, instead of only being able to observe the output of a state, as in the case of HMM, we can observe the state itself. In particular, we can observe that the current state is the same we visited at some earlier point.  We show that this allows us to design monitors for all $\omega$-regular properties that work without any knowledge of the system in the following simple setting. We have a finite-state Markov chain, but we have no information on its size, probabilities, or structure; we can only execute it. We are also given an arbitrary $\omega$-regular property $\varphi$, and the purpose of monitoring is to abort runs of the system that are ``unlikely'' to satisfy the specification until the system executes  a correct run. Let us make this informal idea more precise. The semantics of the system is a finite-state Markov chain, but we have no information on its size, probabilities, or structure. We know that the runs satisfying $\varphi$ have nonzero probability, but the probability is also unknown.  We are allowed to monitor runs of the system and record the sequence of states it visits; further, we are allowed to abort the current run at any moment in time, and reset the system back to its initial state. The challenge is to design a controller for the reset action that satisfies the following property w.p.1: the number of resets is finite, and the run of the system after the last reset satisfies $\varphi$.  

Intuitively, the controller must abort the right number of executions: If it aborts too many, then it may reset infinitely often with positive probability; if it aborts too few, the run after the last reset might violate $\varphi$.
For a safety property like $\G p$ the controller can just abort whenever the current state does not satisfy $p$; indeed, since $\G p$ has positive probability by assumption, eventually the chain executes a run satisfying $\G p$ a.s., and this run is not aborted.  Similarly, for a co-safety property like $\F p$, the controller can abort the first execution after one step, the second after two, steps etc., until a state satisfying $p$ is reached. Since $\F p$ has positive probability by assumption, at least one reachable state satisfies $p$, and with this strategy the system will almost surely visit it.
But for $\G\F p$ the problem is already more challenging. Unlike the cases of $\G p$ and $\F p$, the controller can never be sure that every extension of the current execution will satisfy the property or will violate it. 

In our first result we show that, perhaps surprisingly, notions introduced in \cite{DacaHKP17} can be used to show that the problem has  very simple solution. Let $\mathcal{M}$ be the (unknown) Markov chain of the system, and let $\mathcal{A}$ be a deterministic Rabin automaton for $\varphi$.  Say that a run of the product chain $\mathcal{M} \times \mathcal{A}$ is \emph{good} if it satisfies $\varphi$, and \emph{bad} otherwise.
We define a set of \emph{suspect} finite executions satisfying two properties:
\begin{itemize}
\item[(a)] bad runs a.s. have a suspect prefix; and
\item[(b)] if the set of good runs has nonzero probability, then the set of runs without suspect prefixes also has nonzero probability.
\end{itemize}
The controller resets whenever the current execution is suspect. We call it the \emph{cautious controller}. By property (a) the cautious controller aborts bad runs w.p.1, and by property (b) w.p.1 the system eventually executes a run without suspect prefixes, which by (a) is necessarily good.

The performance of a controller is naturally measured in terms of two parameters: the expected number of resets $R$, and the expected number $S$ of steps to a reset (conditioned on the occurrence of the reset). While the cautious controller is very simple, it has poor performance: In the worst case, both parameters are exponential in the number of states of the chain.  A simple analysis shows that, without further information on the chain, the exponential dependence in $S$ is unavoidable. However, the exponential dependence on $R$ can be avoided: using a technique of \cite{DacaHKP17}, we define a  \emph{bold controller} for which the expected number of resets is almost optimal. 

\paragraph{Related work.} Sistla \etal have also studied the power of finite-state probabilistic monitors for the analysis of non-probabilistic systems, and characterized the monitorable properties \cite{ChadhaSV09}. This is also connected to work by Baier \etal \cite{BaierGB12}. There is also a lot of work on the design of monitors whose purpose is not to abort runs that violate a property, say $\varphi$, but gain information about the probability of the runs that satisfy $\varphi$. This is often called statistical model checking, and we refer the reader to  \cite{LegayDB10} for an overview.

\paragraph{Appendix.} Some proofs have been moved to an Appendix available  at \\
\url{https://www7.in.tum.de/~esparza/tacas2021-134.pdf} 

\section{Preliminaries}

\paragraph{Directed graphs.} A directed graph is a pair $G=(V, E)$, where $V$ is the set of vertices and $E \subseteq V\times V$ is the set of edges. A path (infinite path) of $G$ is a finite (infinite) sequence $\path = v_0, v_1 \ldots$ of vertices such that $(v_i, v_{i+1}) \in E$ for every $i=0,1 \ldots$.  We denote the empty path by $\emptypath$ and concatenation of paths $\path_1$ and $\path_2$ by $\path_1\concat\path_2$. A graph $G$ is strongly connected if for every two vertices $v, v'$ there is a path leading from $v$ to $v'$. A graph $G'=(V',E')$ is a subgraph of $G$, denoted $G' \preceq G$, if $V' \subseteq V$ and $E' \subseteq E \cap V' \times V'$; we write $G' \prec G$ if $G' \preceq G$ and $G'\neq G$. A graph $G' \preceq G$ is a strongly connected component (SCC) of $G$ if it is strongly connected and no graph $G''$ satisfying $G' \prec G'' \preceq G$ is strongly connected. An SCC $G'=(V',E')$ of $G$  is a bottom SCC (BSCC) if $v \in V'$ and $(v, v') \in E$ imply $v' \in V'$.

\paragraph{Markov chains.} A \emph{Markov chain (MC)} is a tuple $\Mc = (\St, \Pm, \init)$, where
\begin{itemize}
\item $\St$ is a finite set of states,
\item $\Pm \;:\; \St \times \St \to [0,1]$ is the transition probability matrix, such that for every $s\in \St$ it holds $\sum_{s'\in \St} \Pm(s,s') = 1$,
\item $\init$ is a probability distribution over $\St$. %\todotanja{or we enforce the initial state?}
\end{itemize}
The graph of $\Mc$ has $\St$ as vertices and $\{ (s, s') \mid \Pm(s,s') > 0\}$ as edges. Abusing language,
we also use $\Mc$ to denote the graph of $\Mc$.
We let $\pmin:=\min(\{\Pm(s,s') > 0\mid s,s'\in\St\})$ denote the smallest positive transition probability in $\Mc$.
% From now on, we will refer to $\Mc = (\St,\Pm,\init)$, unless it is specified otherwise.
A \emph{run} of $\Mc$ is an infinite path $\run = s_0 s_1 \cdots$ of $\Mc$; we let $\run[i]$ denote the state $s_i$. 
Each path $\path$ in $\Mc$ determines the set of runs $\cone(\path)$ consisting of all runs that start with $\path$. 
To $\Mc$ we assign the probability space $%\mathcal P_{\Mc}=
(\runs,\mathcal F,\pr)$, where $\runs$ is the set of all runs in $\Mc$, $\mathcal F$ is the $\sigma$-algebra generated by all $\mathsf{Cone}(\path)$,
and $\pr$ is the unique probability measure such that
$\pr[\mathsf{Cone}(s_0s_1\cdots s_k)] = 
\mu(s_0)\cdot\prod_{i=1}^{k} \Pm(s_{i-1},s_i)$, where the empty product equals $1$.
%Phrases ``almost surely'' or ``almost all runs'' refers to happening with probability 1 according to this measure.
The respective expected value of a random variable $f:\runs\to\mathbb R$ is $\expected[f]=\int_\runs f\ d\,\pr$.

Given a finite set $Ap$ of atomic propositions,
a \emph{labelled Markov chain} (LMC) is a tuple $\Mc = (\St, \Pm, \init, Ap, \Lab)$, where $(\St,\Pm, \init)$ is a MC and $\Lab : \St \to 2^{Ap}$ is a labelling function. 
Given a labelled Markov chain $\Mc$ and an LTL formula $\varphi$, we are interested in the measure $\pr[\Mc \models \varphi] := \pr[\{\run\in\runs \mid \Lab(\run) \models \varphi\}],$
%of paths $\run$ with the $\Lab$-projection satisfying $\varphi$. 
where $\Lab$ is naturally extended to runs by $\Lab(\run)[i]=\Lab(\run[i])$ for all $i$.  

\paragraph{Deterministic Rabin Automata.} For every $\omega$-regular property $\varphi$ there is a \emph{ deterministic Rabin automaton} (DRA) $\dra = (\draS, \draAl, \draTr, \draInit, \draAcc)$ that accepts all runs that satisfy $\varphi$~\cite{BK08}. 
Here
$\draS$ is a finite set of states, $\draTr : \draS \times \draAl \to \draS$ is the transition function, $\draInit \in \draS$ is the initial state, and $\draAcc \subseteq 2^\draS \times 2^\draS$ is the acceptance condition.

\paragraph{Product Markov Chain.} The product of %$\Mc = (\St, \Pm, \init, Ap, \Lab)$ and a DRA $\mathcal{A} = (\draS, \draAl, \draTr, \draInit, \draAcc)$, their product 
a MC $\Mc$ and DRA $\mathcal A$ is the Markov chain $\Mcdra =(\St \times \draS, \Pm', \init')$, where $\Pm'((s,q),(s',q')) = \Pm(s,s')$ if $q'=\draTr(q,\Lab(s'))$ and $\Pm'((s,q),(s',q'))=0$ otherwise,  and $\init'(s,q) = \init(s)$ if $\draTr(\draInit, \Lab(s))=q$ and $\init'(s,q)=0$ otherwise. 
%Note that $\Mcdra$ has the same smallest transition probability $\pmin$ as $\mathcal{M}$.

An SCC $B$ of $\Mcdra$ is \emph{good} if 
there exists a Rabin pair $(E,F) \in \draAcc$ such that $B \cap (S\times E) = \emptyset$ and $B \cap (S\times F) \neq \emptyset$. Otherwise, the SCC is \emph{bad}. Observe that the runs of $\Mcdra$ satisfying $\varphi$ almost surely reach a good BSCC, and the runs that do not satisfy $\varphi$ almost surely reach a bad BSCC.

\section{The cautious monitor} \label{sec:cautious}

All our monitors assume the existence of  a deterministic Rabin automaton $\dra = (\draS, \draAl, \draTr, \draInit, \draAcc)$ for $\varphi$.  They monitor the path $\path$ of the chain $\mathcal{M}\otimes \dra$ corresponding to the path of $\mathcal{M}$ executed so far. In order to present the cautious monitor we need some definitions and notations.

\paragraph{Candidate of a path.}  
Given a finite or infinite path $\run=s_0 s_1\cdots$ of $\mathcal{M}\otimes \dra$, the \emph{support} of $\run$ is the set $\support{\run}=\{s_0,s_1,\ldots\}$. The \emph{graph of $\run$} is $G_\run = (\support{\run}, E_\run)$, where $E_\run=\{(s_i,s_{i+1})\mid i=0,1,\ldots\}$. % for each $i$. 

Let $\path$ be a path of $\mathcal{M}\otimes \dra$. If 
$\path$ has a suffix $\candidatepath$ such that $G_\candidatepath$ is a BSCC of $G_\path$, we call $\support{\candidatepath}$ the \emph{candidate of $\path$}. Given a path $\path$, we define $\candidate(\path)$ as follows: If $\path$ has a candidate $\support{\candidatepath}$, then $\candidate(\path) := \support{\candidatepath}$; otherwise, $\candidate(\path) := \bot$, meaning that $\candidate(\path)$ is undefined. 

\begin{example}
Consider the family of Markov chains of Figure \ref{fig:largescc}.  We have e.g. $K(s_0) = K(s_0s_1) = K(s_0s_0s_1) = \bot$, $K(s_0 s_0) = \{s_0\}$, and $K(s_0s_1s_0s_1) = \{s_0,s_1\}$. In the family of Figure \ref{fig:smallscc} we have e.g. $K(s_0s_1s_1)=\{s_1\}$, $K(s_0s_1s_1s_2)= \bot $, and $K(s_0s_1s_1s_2s_2)= \{s_2\}$.
\end{example}

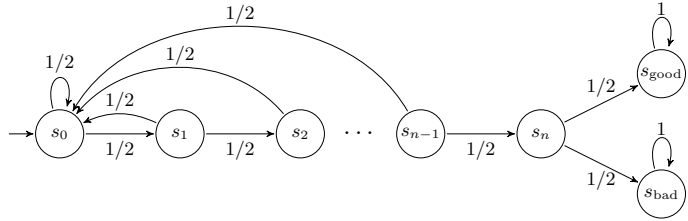
\begin{figure}
	\centering
	\scalebox{0.8}{
		\begin{tikzpicture}
		\node[state,initial,initial text=] (s0) at (0,1){$s_0$};
		\node[state] (s1) at (2,1){$s_1$};
		\node[state] (s2) at (4,1){$s_2$};
		\node (dots) at (5,1){\large $\cdots$};
		\node[state] (s3) at (6,1){$s_{n-1}$};
		\node[state] (s4) at (8,1){$s_n$};
		\node[state] (sg) at (10,2){$s_{\text{good}}$};
		\node[state] (sb) at (10,0){$s_{\text{bad}}$};
		\path[->] 
        (s0) edge[loop above] node[above=-1pt]{$1/2$} ()
		(s0) edge node[below]{$1/2$} (s1)
		(s1) edge node[below]{$1/2$} (s2)
		(s3) edge node[below]{$1/2$} (s4)
		(s1) edge[bend right=25] node[above=-2pt]{$1/2$} (s0)
		(s2) edge[bend right=50] node[above=-2pt]{$1/2$} (s0)
		(s3) edge[bend right=60] node[above=-2pt]{$1/2$} (s0)
		(sg) edge[loop above] node[above=-2pt]{$1$} ()
		(sb) edge[loop above] node[above=-2pt]{$1$} ()
		(s4) edge node[above]{$1/2$} (sg)
		(s4) edge node[below]{$1/2$} (sb);
		\end{tikzpicture}}
	\caption{A family of Markov chains}
	\label{fig:largescc}
\end{figure}

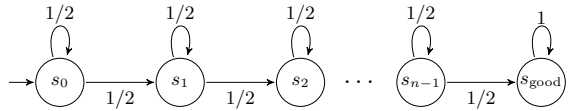
\begin{figure}[ht]
	\centering
	\scalebox{0.8}{
		\begin{tikzpicture}
		\node[state,initial,initial text=] (s0) at (0,1){$s_0$};
		\node[state] (s1) at (2,1){$s_1$};
		\node[state] (s2) at (4,1){$s_2$};
		\node (dots) at (5,1){\large $\cdots$};
		\node[state] (s3) at (6,1){$s_{n-1}$};
		\node[state] (sg) at (8,1){$s_{\text{good}}$};
		\path[->] 
		(s0) edge node[below]{$1/2$} (s1)
		(s1) edge node[below]{$1/2$} (s2)
		(s3) edge node[below]{$1/2$} (sg)
		(s0) edge[loop above] node[above=-2pt]{$1/2$} ()
		(s1) edge[loop above] node[above=-2pt]{$1/2$} ()
		(s2) edge[loop above] node[above=-2pt]{$1/2$} ()
		(s3) edge[loop above] node[above=-2pt]{$1/2$} ()
		(sg) edge[loop above] node[above=-2pt]{$1$} ();		
		\end{tikzpicture}}
	\caption{A family of Markov chains with small SCCs}
	\label{fig:smallscc}
\end{figure}

\paragraph{Good and bad candidates.} A candidate $\candidate$ is \emph{good} if there exists a Rabin pair $(E,F) \in \draAcc$ such that $\candidate \cap (S\times E) = \emptyset$ and $\candidate_i \cap (S\times F) \neq \emptyset$.   Otherwise, $\candidate$ is \emph{bad}.
A path $\path$ of $\mathcal{M}\otimes \dra$ is \emph{bad} if $\candidate(\path)\neq \bot$ and $\candidate(\path)$ is a bad candidate.
The function $\textsc{Bad}(\path)$ returns \textbf{true} if $\path$ is bad, and \textbf{false} otherwise. 

%\begin{definition}
%A path $\path$ of $\Mcdra$ is \emph{suspect} if it is bad and $\candidate(\path)$ has strength at least 2.
%\end{definition}

\begin{proposition}
\label{prop:suspect}
\begin{itemize}
\item[(a)] Bad runs of $\Mcdra$ almost surely have a bad finite prefix.
\item[(b)] If the good runs of $\Mcdra$ have nonzero probability, then the set of runs without bad prefixes also has nonzero probability.
\end{itemize}
\end{proposition}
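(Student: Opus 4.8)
\medskip
\noindent\emph{Proof plan.}
For part~(a) I would reduce to the observation recorded just above the statement, that a bad run of $\Mcdra$ almost surely reaches a bad BSCC. So fix a bad run $\run$ reaching a bad BSCC $B$ and let $t$ be least with $\run[t]\in B$; as $B$ is bottom, $\run[s]\in B$ for all $s\ge t$. Since $B$ is, as a sub-chain, irreducible, almost surely every edge of $B$ is traversed infinitely often after time $t$; pick $m\ge t$ so that the segment $\candidatepath:=\run[t]\run[t+1]\cdots\run[m]$ already uses every edge of $B$, and put $\path:=\run[0]\run[1]\cdots\run[m]$. Then $G_\candidatepath=B$, and $B$ is a BSCC of $G_\path$: it is bottom since $G_\path$ contains no edge of $\Mcdra$ leaving $B$ (there are none), and no larger strongly connected subgraph of $G_\path$ can contain a vertex outside $B$, since from $B$ one cannot reach such a vertex in $G_\path$. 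Hence $\candidate(\path)$ equals the state set of $B$, and because the defining condition of a good candidate is literally that of a good BSCC, this candidate is bad; so $\path$ is a bad prefix of $\run$. The bad runs to which this argument does not apply form a null set, which proves~(a).

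For part~(b) I would first isolate what must be constructed: a single finite path $\path$ of $\Mcdra$ with (i)~$\pr[\cone(\path)]>0$, (ii)~no prefix of $\path$ bad, and (iii)~$\path$ ending in a good BSCC $B$ with $\candidate(\path)$ equal to the state set of $B$. This suffices, because exactly as in~(a) every extension of $\path$ stays in $B$ and keeps $B$ as its candidate, hence is never bad; so $\cone(\path)$ lies inside the event ``no bad prefix'', which therefore has probability at least $\pr[\cone(\path)]>0$.

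To build $\path$: since the good runs have positive probability and a.s.\ reach a good BSCC, some good BSCC $B$ is reached with positive probability; fix a Rabin pair $(E,F)\in\draAcc$ witnessing goodness of $B$ and a state $f\in B\cap(S\times F)$. Inside the strongly connected $B$ there is a simple cycle $\sigma$ through $f$; let $b$ be a vertex of $\sigma$, which is reachable in $\Mcdra$ from the support of $\init'$ because $B$ is reachable and strongly connected. Take $\path$ to be: a simple path in $\Mcdra$ from a state in the support of $\init'$ to $b$; then $\sigma$ traversed from $b$ back to $b$; then finitely many \emph{excursions}, each a simple sub-walk that leaves the SCC currently containing $\sigma$, reaches at least one hitherto unvisited state of $B$, and returns, the traversed states being merged into that SCC on return; since $B$ is strongly connected the excursions can be scheduled so that each strictly enlarges the SCC and after finitely many of them it equals~$B$. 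The invariant is: whenever a prefix of $\path$ has a candidate, that candidate is a bottom SCC of the current graph that is contained in the state set of $B$ and contains~$f$. Indeed, along the initial simple path and along the outgoing leg of each excursion the only bottom SCC of the current graph is the single current vertex, which carries no traversed self-loop, so there is no candidate; the invariant becomes true when $\sigma$ closes, and it is restored each time an excursion returns, since then all states traversed since leaving the big SCC merge with it into one bottom SCC that still lies in $B$ and still contains $f$. As containment in $B$ forces disjointness from $S\times E$ and membership of $f$ forces a nonempty intersection with $S\times F$, every candidate arising is good; so no prefix of $\path$ is bad, and at the end the candidate is all of $B$. Finally $\pr[\cone(\path)]>0$ because $\path$ starts in the support of $\init'$ and uses only genuine edges of $\Mcdra$.

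The step I expect to be the real obstacle is this construction. A good BSCC may contain bad sub-cycles, so one cannot hope that \emph{every} run entering $B$ avoids a bad prefix; the point is to steer a positive-probability bundle through the ``good part'' of $B$ and keep the running candidate from ever collapsing onto a bad sub-SCC. Three facts make it work: a candidate exists only once the path closes a cycle, so the simple approach to $B$ is harmless; routing through the simple cycle $\sigma$ forces the \emph{first} nontrivial candidate to already contain the accepting state $f$; and every later excursion only enlarges the SCC holding $f$, so the candidate keeps containing $f$, hence stays good, until it equals~$B$.
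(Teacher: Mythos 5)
Your proof is correct and follows essentially the same route as the paper's: for (a), a bad run a.s.\ traverses every edge of its (bad) BSCC, yielding a prefix whose candidate is that BSCC; for (b), build a positive-probability lasso into a good BSCC whose first cycle already contains an accepting state, then grow the candidate by excursions inside the BSCC so that every intermediate candidate is good or undefined. The only cosmetic difference is that you route the first cycle through a chosen simple cycle containing $f$, whereas the paper closes the lasso back onto the support of a simple path ending at an accepting state; both serve the same purpose of forcing the first candidate to be good.
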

\begin{proof}
(a) By standard properties of Markov chains, bad runs of $\Mcdra$ almost surely reach a BSCC of $\Mcdra$ and then traverse all edges of that BSCC infinitely often.
Therefore, a bad run~$\run$ almost surely has a finite prefix~$\path$ that has reached a bad BSCC, say $B$, of~$\Mcdra$ and has traversed all edges of~$B$ at least once.
Then $\candidate(\path) = B$, and so $\path$ is bad.
%Let $\run$ be a bad run of $\Mcdra$.
%Almost surely, $\run$ reaches a BSCC $S$ of the graph of $\Mcdra$
%W.p.1 the final candidate $\candidate$ of $\run$ is a BSCC of the graph of $\Mcdra$, and for every $k \geq 1$ 
%there is a prefix $\path_k$ of $\run$ such that $\candidate(\path_k) = \candidate$ and $\candidate(\path_k)$ has strength at least $k$. 
%Since $\run$ is bad, the BSCC is not accepting, and so $\candidate(\path_k)$ is bad for every $k \geq 1$. So $\candidate(\path_k)$ is suspect for
%every $k \geq 2$.

\medskip \noindent (b) Suppose the good runs of $\Mcdra$ have nonzero probability.
We construct a finite path, $\path$, starting at $s_0$ so that $\candidate(\path')$ is good for all extensions $\path'$ of~$\path$, and $\candidate(\path'')$ is good or undefined for all prefixes $\path''$ of~$\path$.

Since the good runs of $\Mcdra$ have nonzero probability, $\Mcdra$ has a good BSCC $B$.
Let $\path_1'$ be a simple path from~$s_0$ to a state $s_1 \in B \cap (S \times F)$.
Extend~$\path_1'$ by a shortest path back to $\support{\path_1'}$ (forming a lasso) and denote the resulting path by~$\path_1$.
Observe that $\candidate(\path_1) \subseteq B$ is good, and $\candidate(\path') = \bot$ holds for all proper prefixes $\path'$ of~$\path_1$.
If $\candidate(\path_1) = B$, then we can choose $\path := \path_1$ and $\path$ has the required properties.
Otherwise, let $\path_2'$ be a shortest path extending~$\path_1$ such that $\path_2'$ leads to a state in $B \setminus \candidate(\path_1)$.
Extend that path by a shortest path back to $\candidate(\path_1)$ and denote the resulting path by~$\path_2$.
Then we have $\candidate(\path_1) \subsetneq \candidate(\path_2) \subseteq B$, and $\candidate(\path_2)$ is good, and  $\candidate(\path') \in \{\candidate(\path_1), \bot\}$ holds for all paths~$\path'$ that extend~$\path_1$ and are proper prefixes of~$\path_2$.
Repeat this process until a path $\path$ is found with $\candidate(\path) = B$.
This path has the required properties.
\end{proof}

\paragraph{The cautious monitor.}  The cautious monitor is shown in Algorithm \ref{alg:cautious}. 
The algorithm samples a run of $\Mcdra$ step by step, and resets  whenever the current path $\path$ is bad.
\begin{algorithm}[ht]
  \caption{\textsc{CautiousMonitor}}
  \label{alg:cautious}
\begin{algorithmic}[1]
\While {\textbf{true}}
\State $\path \gets \lambda$ \Comment{Initialize path}
\Repeat
\State $\path \gets \path \concat \nextState(\path)$   \Comment{Extend path}
\Until {$\textsc{Bad}(\path)$} % \textbf{and} $\strength(\path) \geq 2$} \Comment{$\pi$ is suspect} 
\EndWhile
\end{algorithmic}
\end{algorithm}
We formalize its correctness with respect to the specification given in the introduction. Consider the infinite-state Markov chain $\mathcal{C}$ (for cautious) defined as follows. 
The states of $\mathcal{C}$ are pairs $\tup{\path, r}$, where $\path$ is a path of $\Mcdra$, and $r\geq 0$. Intuitively, $j$ counts the number of resets so far. The initial probability distribution assigns probability $1$ to $\tup{\lambda,0}$, and $0$ to all others. The transition probability matrix $\Pm_\mathcal{C}(\tup{\path, r}, \tup{\path',r'})$ is defined as follows.
\begin{itemize}
\item If $\path$ is bad, then 
$$
\Pm_\mathcal{C}(\tup{\path, r}, \tup{\path',r'}) =
\begin{cases}
1 & \mbox{if $\path' = \lambda$ and $r' = r+1$} \\
0 & \mbox{otherwise}
\end{cases}
$$
\noindent We call such a transition a \emph{reset}.
\item If $\path$ is not bad, then
$$
\Pm_\mathcal{C}(\tup{\path, r}, \tup{\path',r}) =
\begin{cases}
\Pm(p, p') & \mbox{if $r'=r$, $\path = \path'' \concat p$ and $\path' = \path \concat p'$} \\
0 & \mbox{otherwise.}
\end{cases}
$$
\end{itemize}
A run of \textsc{CautiousMonitor} corresponds to a run $\run = \tup{\path_1, r_1} \tup{\path_2, r_2} \cdots$ of~$\mathcal{C}$. Let $R$ be the random variable that assigns to $\run$ the supremum of $r_1, r_2 \cdots$. Further, let $S_\varphi$ be the the set of runs such that $R(\run)<\infty$ and the suffix of $\run$ starting immediately after the last reset satisfies $\varphi$.
The following theorem states that \textsc{CautiousMonitor} is correct with respect to the specification described in the introduction. The proof is an immediate consequence of Proposition \ref{prop:suspect}.

\begin{theorem}
Let $\varphi$ be a LTL formula such that $\pr[\mathcal{M} \models \varphi] > 0$.
Let $\mathcal{C}$ be the Markov chain defined as above. We have
\begin{itemize}
\item[(a)] $\pr_\mathcal{C}[R < \infty] = 1$.
\item[(b)] $\pr_\mathcal{C}[S_\varphi  | R < \infty] = \pr_\mathcal{C}[S_\varphi]=1$.
\end{itemize}
\end{theorem}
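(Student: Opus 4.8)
The plan is to derive both parts of the theorem from Proposition~\ref{prop:suspect} by relating the runs of the infinite-state chain $\mathcal{C}$ to the runs of $\Mcdra$. The key observation is that $\mathcal{C}$ is organized into ``epochs'': between consecutive resets, $\mathcal{C}$ simulates a fresh run of $\Mcdra$ started from its initial distribution, extending the current path one state at a time, and a reset happens exactly when the current path becomes bad. So a run of $\mathcal{C}$ either performs infinitely many resets, meaning it produces an infinite sequence of finite bad paths of $\Mcdra$, or it performs finitely many resets, meaning the last epoch is an infinite run of $\Mcdra$ that has \emph{no} bad prefix.

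For part~(a), I would show $\pr_\mathcal{C}[R = \infty] = 0$ by a Borel--Cantelli-style argument on epochs. Fix the path $\path$ constructed in the proof of Proposition~\ref{prop:suspect}(b): it has the property that no prefix of it is bad, and every extension of it has a good (hence non-$\bot$, non-bad) candidate, so in particular $\path$ itself is never aborted and any run extending $\path$ stays in its epoch forever. Let $\delta := \pr[\cone(\path)] > 0$ in $\Mcdra$; this is a strictly positive constant because $\path$ is a fixed finite path. Each epoch of $\mathcal{C}$ is an independent simulation of $\Mcdra$, so in each epoch the probability that the simulated run begins with $\path$ is exactly $\delta$, and if it does, that epoch never ends. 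Hence the probability that the first $k$ epochs all end (i.e.\ $R \geq k$) is at most $(1-\delta)^k \to 0$, giving $\pr_\mathcal{C}[R = \infty] = 0$, i.e.\ $\pr_\mathcal{C}[R < \infty] = 1$. (One should phrase this via the cone/epoch structure carefully, using the strong Markov property of $\mathcal{C}$ at reset states, but it is routine.)

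For part~(b), condition on $R < \infty$, which by~(a) has probability~$1$, so it suffices to prove $\pr_\mathcal{C}[S_\varphi] = 1$; the conditional statement and the unconditional one then coincide. On the event $R < \infty$, the suffix of the $\mathcal{C}$-run after the last reset corresponds to an infinite run $\run'$ of $\Mcdra$ that, by construction of $\mathcal{C}$, has no bad prefix (otherwise a further reset would have occurred). By Proposition~\ref{prop:suspect}(a), a bad run of $\Mcdra$ almost surely has a bad prefix; equivalently, a run with no bad prefix is almost surely good, i.e.\ satisfies $\varphi$. To turn this into a statement about $\mathcal{C}$ I would note that the law of the post-last-reset suffix, on the event that it is infinite and bad-prefix-free, is absolutely continuous with respect to (a conditioning of) the law of $\Mcdra$, so the $\pr$-null set of bad-prefix-free-yet-bad runs pulls back to a $\pr_\mathcal{C}$-null set. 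Therefore $\pr_\mathcal{C}[S_\varphi \mid R < \infty] = 1$, and hence $\pr_\mathcal{C}[S_\varphi] = 1$ as well.

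The main obstacle is purely bookkeeping: making the informal ``each epoch is an independent copy of $\Mcdra$'' precise, since $\mathcal{C}$ has infinitely many states and its runs mix finite bad paths with one final infinite path. The clean way is to identify, for each $k$, the sub-$\sigma$-algebra generated by the first $k$ epochs and to apply the strong Markov property at the (stopping) time of the $k$-th reset, observing that conditioned on a reset occurring the continuation distribution is exactly that of $\mathcal{C}$ started afresh at $\tup{\lambda, k}$, which in turn simulates $\Mcdra$ from $\init'$. Once this correspondence is set up, parts~(a) and~(b) are immediate from Proposition~\ref{prop:suspect}(b) and~(a) respectively, as the paper already indicates.
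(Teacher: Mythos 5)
Your proposal is correct and follows exactly the route the paper intends: the paper states only that the theorem is ``an immediate consequence of Proposition~\ref{prop:suspect}'', and your epoch decomposition—deriving (a) from the positive-probability bad-prefix-free cone of Proposition~\ref{prop:suspect}(b) via a geometric bound on the number of resets, and (b) from Proposition~\ref{prop:suspect}(a) by noting that the post-last-reset suffix is bad-prefix-free and hence almost surely good—is precisely the omitted argument. The bookkeeping you flag (strong Markov property at reset times to make the epochs independent copies of $\Mcdra$) is the right way to make it rigorous.
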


%\begin{remark}
%If we redefine suspect runs as those in which the current candidate is bad and has strength at least $k$
%for some $k > 2$, Proposition \ref{prop:suspect} still holds, and so the corresponding cautious monitor is still correct. 
%\end{remark}

\paragraph{Performance.} Let $T$ be the random variable that assigns to $\run$ the number of steps till the last reset, or $\infty$ if the number of resets is infinite. First of all, we observe that without any assumption on the system $\expected(T)$ can grow exponentially in the number of states of the chain. Indeed, consider the family of Markov chains of Figure \ref{fig:largescc} and the property $\F p$. Assume the only state satisfying $p$ is $s_{\text{good}}$. Then the product of each chain in the family with the DRA for $\F p$ is essentially the same chain, and the good runs are those reaching $s_{\text{good}}$. We show that even if the controller has full knowledge of the chain $\expected(T)$ grows exponentially. Indeed, since doing a reset brings the chain to $s_0$, it is clearly useless to abort a run that has not yet reached $s_n$. In fact, the optimal monitor is the one that resets whenever the run reaches $s_\text{bad}$. The average number of resets for this controller is clearly 1, and so $\expected(T)$ is the expected number of steps needed to reach $s_\text{bad}$, under the assumption that it is indeed reached. It follows $\expected(T) \ge 2^n$. We formulate this result as a proposition.

\begin{fact}
Let ${\cal M}_n$ be the Markov chain of Figure \ref{fig:largescc} with $n$ states. Given a monitor ${\cal N}$ for the property $\F p$, let  $T_{\cal N}$ be the random variable that assigns to a run of the monitor on  ${\cal M}_n$ the number of steps till the last reset, or $\infty$ if the number of resets is infinite. Then $\expected(T_{\cal N}) \geq 2^n$ for every monitor ${\cal N}$.
\end{fact}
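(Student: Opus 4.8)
The plan is to make rigorous the informal argument given just before the statement. Throughout, a \emph{monitor for $\F p$} means a reset controller that is correct in the sense of the paper (almost surely it performs finitely many resets, and the run executed after the last reset satisfies $\F p$), a \emph{segment} is the portion of a run between two consecutive resets, and we may assume $\expected(T_{\mathcal N}) < \infty$ since otherwise there is nothing to prove. First I would record the structure of $\mathcal M_n$: as $s_\text{good}$ (the only $p$-state) and $s_\text{bad}$ are the only absorbing states and $s_\text{good}$ is reachable only through $s_n$, a run satisfies $\F p$ iff it eventually reaches $s_\text{good}$, and to do so it must reach $s_n$ first; from $s_0$ the chain reaches $s_n$ almost surely (the only absorbing states lie beyond $s_n$), and from $s_n$ it moves to $s_\text{good}$ or $s_\text{bad}$ with probability $\tfrac12$ each, so $\pr[\mathcal M_n \models \F p] = \tfrac12 > 0$ and the monitoring setting applies. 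Let $h_i$ be the expected number of steps to reach $s_n$ from $s_i$; solving $h_n = 0$, $h_0 = 1 + \tfrac12 h_0 + \tfrac12 h_1$, and $h_i = 1 + \tfrac12 h_{i+1} + \tfrac12 h_0$ for $1 \le i \le n-1$ gives $h_i = 2^{n+1} - 2^{i+1}$, in particular $h_0 = 2^{n+1}-2$.

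Next I would establish two facts about an arbitrary correct monitor $\mathcal N$. (i)~Whenever a segment enters $s_\text{bad}$, $\mathcal N$ almost surely resets it: otherwise, with positive probability that segment is never reset, hence is the final one, and it then stays in $s_\text{bad}$ and violates $\F p$ --- a contradiction. (ii)~Let $\mathcal N^\ast$ be the monitor that resets a segment exactly when it enters $s_\text{bad}$ (it is easily checked to be correct); I claim $\expected(T_{\mathcal N}) \ge \expected(T_{\mathcal N^\ast})$ for every correct $\mathcal N$. The intuition is precisely the informal claim preceding the statement: the only way to progress towards $s_\text{good}$ is to advance along $s_0,s_1,\dots,s_n$, while every reset discards all of it by returning the chain to $s_0$, and since from any $s_i$ with $1\le i\le n-1$ the chain falls back to $s_0$ with probability $\tfrac12$ no matter what the monitor does, ``continue'' weakly dominates ``reset'' at every state (couple the fall-back event; on its complement continuing strictly advances, and at $s_n$ continuing may already reach $s_\text{good}$). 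Hence a premature reset is never advantageous, and I would turn this into a proof by a standard exchange argument on the first history at which $\mathcal N$ resets at a state other than $s_\text{bad}$: replacing that reset by ``continue'' preserves correctness and, by the coupling above (together with the fact that a later forced reset at $s_\text{bad}$ returns to $s_0$ anyway), does not increase $\expected(T)$. This step is the main obstacle.

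Finally I would compute $\expected(T_{\mathcal N^\ast})$. Under $\mathcal N^\ast$ the successive segments are i.i.d.: each reaches $s_n$ after $h_0$ steps in expectation and then takes one further step to a sink, hence has expected length $h_0+1$, and the sink is $s_\text{bad}$ (causing a reset) or $s_\text{good}$ (ending the run) with probability $\tfrac12$ each, independently of the path to $s_n$. Thus the number of resets $N$ is geometric with $\expected(N)=1$, and since a segment's length is independent of whether its exit from $s_n$ is good or bad, Wald's identity gives $\expected(T_{\mathcal N^\ast}) = \expected\bigl(\sum_{i=1}^{N+1} L_i\bigr) - \expected(L_{N+1}) = 2(h_0+1) - (h_0+1) = h_0+1 = 2^{n+1}-1$, where $L_i$ denotes the length of the $i$-th segment. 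With (ii) this gives $\expected(T_{\mathcal N}) \ge 2^{n+1}-1 \ge 2^n$ for every correct monitor $\mathcal N$, as claimed. The hitting-time recurrence and the Wald step are routine; the delicate point is (ii), i.e.\ turning ``a premature reset is useless'' into a proof --- alternatively one can bypass exact optimality by a direct accounting over the i.i.d.\ attempts to reach $s_n$ together with the segments aborted before $s_n$, at the cost of a more delicate bookkeeping.
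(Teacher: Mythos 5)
Your proposal follows essentially the same route as the paper: argue that the monitor which resets exactly upon entering $s_{\text{bad}}$ is optimal among correct monitors, then compute its expected time to the last reset via the hitting time of $s_n$ and Wald's identity. The one step you leave as a sketch --- that premature resets never help, so this monitor is indeed optimal --- is precisely the step the paper itself asserts without proof (``it is clearly useless to abort a run that has not yet reached $s_n$''), so your write-up, which additionally supplies the explicit hitting-time recurrence and the independence/Wald bookkeeping, is at least as detailed as the paper's own argument.
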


We learn from this example that all monitors have problems when the time needed to traverse a non-bottom SCC of the chain can be very large. So we conduct a parametric analysis in the maximal size $\maxsizescc$ of the SCCs of the chain. This reveals  the weak point of \textsc{CautiousMonitor}: $\expected(T)$ remains exponential even for families satisfying $\maxsizescc=1$. Consider the family of Figure \ref{fig:smallscc}.
\textsc{CautiousMonitor} resets whenever it takes any of the self-loops in states $s_i$. Indeed, after taking a self-loop in state, say $s_i$, the current path $\path$ ends in $s_i s_i$, and so we have $\candidate(\path) = \{s_i\}$, which is a bad candidate. % and $\strength(\path) = 2$.
So after the last reset the chain must follow the path $s_0 s_1 s_2 \cdots s_{\text{good}}$. Since this path has probability $1/2^n$, we get $\expected(T) \geq 2^n$.

In the next section we introduce a ``bold'' monitor. Intuitively, instead of resetting at the first suspicion that the current path may not yield a good run, the bold monitor ``perseveres''.

\section{The bold monitor} 

We proceed in two steps. In Section \ref{subsec:chainwithpmin}, inspired by \cite{DacaHKP17}, we design a bold controller that knows the minimum probability $\pmin$ appearing in 
$\Mc$ (more precisely, a lower bound on it).  In Section \ref{subsec:arbitrarychain} we modify this controller to produce another one that works correctly without any prior knowledge 
about $\Mc$,  at the price of a performance penalty.

\subsection{Chains with known minimal probability}
\label{subsec:chainwithpmin}
The cautious controller aborts a run if the strength of the current candidate exceeds a fixed threshold that remains constant throughout the execution. 
In contrast, the bold controller dynamically increases the threshold, depending on %both the number of resets and on 
the number of different candidates it has seen since the last reset. Intuitively, the controller becomes bolder over time,
which prevents it from resetting too soon on the family of Figure \ref{fig:smallscc}, independently of the length of the chain.
%Continuing with this example, the trick is to choose the rate of growth of the threshold so that the $j$-th run has 
%probability at least $(1 - 2^{-j})$ of reaching the end of the chain, and so of not being aborted (this idea is taken from 
%\cite{DacaHKP17}).
The controller is designed so that it resets almost all bad runs and only a fixed fraction~$\varepsilon$ of the good runs.
Lemma~\ref{lem:bold} below shows how to achieve this. We need some additional definitions.

\paragraph{Strength of a candidate and strength of a path.} Let $\path$ be a path of $\mathcal{M}\otimes \dra$.
The \emph{strength of $\candidate(\path)$} in $\path$ is undefined if $\candidate(\path) = \bot$. 
Otherwise, write $\path = \path' \, s \, \candidatepath$, where $\path'$ is the shortest prefix of $\path$ such that  $\candidate(\path' s)  = \candidate(\path)$; the strength of $\candidate(\path)$ is  the largest $k$ such that every state of $\candidate(\path)$ occurs at least $k$ times in $s \, \candidatepath$, and the last element of $s \, \candidatepath$ occurs at least $k+1$ times. Intuitively, if the strength is $k$ then every state of the candidate has been been exited at least $k$ times but, for technical reasons, we start counting only after the candidate is discovered. The function $\strength(\path)$ returns the strength of $\candidate(\path)$ if $\candidate(\path) \neq \bot$, and $0$ otherwise.

\begin{example}
The following table illustrates the definition of strength.
$$\begin{array}{l|l|l|l|l|c}
\path & \candidate(\path) & \path' & s & \candidatepath & \strength(\path) \\ \hline
p_0p_1 & \bot & - & - & - & 0  \\
p_0p_1p_1 & \{p_1\} & p_0p_1 & p_1 & \epsilon & 0 \\
p_0p_1p_1p_1 & \{p_1\} & p_0p_1 & p_1 & p_1 & 1  \\
p_0p_1p_1p_1p_0 & \{p_0,p_1\} & p_0p_1p_1p_1 & p_0 & \epsilon & 0  \\
p_0p_1p_1p_1p_0p_1& \{p_0,p_1\} & p_0p_1p_1p_1 & p_0 & p_1 & 0 \\
p_0p_1p_1p_1p_0p_1p_0 & \{p_0,p_1\} & p_0p_1p_1p_1 & p_0 & p_1 p_0 & 1 \\
%p_0p_1p_1p_1p_0p_1p_0p_0 & \{p_0,p_1\} & p_0p_1p_1p_1 & p_0 & p_1 p_0 p_0 & 1  \\
p_0p_1p_1p_1p_0p_1p_0p_0 p_1& \{p_0,p_1\} & p_0p_1p_1p_1 & p_0 & p_1 p_0 p_0 p_1& 2 
\end{array}$$
\end{example}

\paragraph{Sequence of candidates of a run.}   Let $\run=s_0s_1\cdots$ be a run of $\Mcdra$. Consider the sequence of random variables defined by $\candidate(s_0\ldots s_j)$ for $j\geq 0$, and let $(\candidate_i)_{i\geq 1}$ be the subsequence without undefined elements and with no repetition of consecutive elements.
For example, for $\varrho=p_0p_1p_1p_1p_0p_1p_2p_2\cdots$, we have $K_1=\{p_1\}$, $K_2=\{p_0,p_1\}$, $K_3=\{p_2\}$, etc.
Given a run $\run$ with a sequence of candidates $\candidate_1, \candidate_2 \ldots, \candidate_k$, we call $\candidate_k$ ithe final candidate. We define the \emph{strength} of $\candidate_i$ in $\run$ as the supremum of the strengths of $\candidate_i$ in all prefixes $\path$ of $\run$ such that $\candidate(\path) = \candidate_i$. %
For technical convenience, we define $\candidate_\ell:=\candidate_j$ for all $\ell>j$ and $\candidate_\infty:=\candidate_j$.  Observe that $\run$ satisfies $\varphi$ if{}f its final candidate is good.  
\begin{lemma} \label{lem:strength-increases}
W.p.1 the final candidate of a run $\run$ is a BSCC of $\Mcdra$. Moreover, for every $k$ there exists a prefix  $\path_k$ of $\run$ such that
$\candidate(\pi_k)$ is the final candidate and $\strength(\pi_k) \geq k$.
\end{lemma}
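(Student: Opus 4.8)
The plan is to combine two standard facts about finite Markov chains with the definitions of candidate and strength. First, w.p.1 a run $\run$ of $\Mcdra$ reaches some BSCC $B$ and thereafter visits every state of $B$ infinitely often, while visiting every state outside $B$ only finitely often. Fix such a run $\run$ and let $j_0$ be an index after which $\run$ stays inside $B$. I claim that the candidate stabilizes to $B$. Indeed, consider any prefix $\path$ of $\run$ with $\path$ long enough that (i) $\path$ extends past $j_0$, and (ii) the suffix of $\path$ after position $j_0$ has already traversed every edge of $B$ at least once (this happens a.s. since all edges of $B$ are taken infinitely often). Then the subgraph of $G_\path$ induced by $B$ is strongly connected, has no edges leaving $B$ in $G_\path$ (because no state outside $B$ appears after $j_0$, so any edge from a $B$-state recorded late in $\path$ goes back into $B$), and every edge of $B$ is present; hence $B$ is a BSCC of $G_\path$, so $\candidate(\path)=B$. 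Since this holds for all sufficiently long prefixes, the sequence $(\candidate_i)$ is finite and its final element is $B$, which is a BSCC of $\Mcdra$. Because $\run\models\varphi$ iff it reaches a good BSCC, and here $B$ is the final candidate, this also re-proves the remark that $\run$ satisfies $\varphi$ iff its final candidate is good.

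For the ``moreover'' part, fix $k$. Once $\candidate(\path)=B$ for all prefixes $\path$ of length $\geq N$ (some $N$ depending on $\run$), the decomposition $\path=\path'\,s\,\candidatepath$ in the definition of strength uses a fixed prefix $\path'$ (namely $\path'$ is the shortest prefix with $\candidate(\path's)=B$, which no longer changes once the candidate has settled). So for prefixes beyond length $N$, the strength of $\candidate(\path)=B$ in $\path$ is just the largest $m$ such that every state of $B$ occurs at least $m$ times in the suffix $s\,\candidatepath$ and the last element occurs at least $m+1$ times. Since $\run$ visits every state of $B$ infinitely often, as we take longer and longer prefixes the count of each state of $B$ in $s\,\candidatepath$ grows without bound; in particular there is a prefix $\path_k$ long enough that every state of $B$ occurs at least $k+1$ times in the relevant suffix, whence $\strength(\path_k)\geq k$ and $\candidate(\path_k)=B$ is the final candidate, as required.

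The only mildly delicate point — the part I expect to be the main obstacle to writing cleanly — is the bookkeeping around the definition of strength: one must verify that once the candidate has stabilized to $B$, the ``shortest prefix $\path'$ with $\candidate(\path's)=\candidate(\path)$'' is itself stable (independent of which long prefix $\path$ we look at), so that lengthening $\path$ only lengthens the suffix $s\,\candidatepath$ and can only increase the counts, never reset them. This follows because $\candidate$ takes the value $B$ on a whole tail of prefixes and on nothing between the first such prefix and later ones, so the first prefix on which $B$ becomes the candidate is fixed; everything else is a monotonicity argument on occurrence counts. The measure-zero exceptional set is exactly the set of runs that do not reach a BSCC or do not traverse it fully, which has probability $0$ by the standard theory of finite Markov chains.
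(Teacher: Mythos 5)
Your proof is correct and follows the same route as the paper, which disposes of this lemma in one line by invoking the standard fact that w.p.1 a run of a finite Markov chain gets trapped in a BSCC and visits all of its states (and edges) infinitely often; you simply spell out the bookkeeping that the paper leaves implicit. In particular, your observation that the decomposition $\path = \path'\, s\, \candidatepath$ in the definition of strength uses a prefix $\path'$ that is fixed once the candidate stabilizes to the BSCC is exactly the detail needed to make the ``moreover'' part rigorous.
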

\begin{proof}
Follows immediately from the definitions, and the fact that w.p.1 the runs of a finite-state Markov chain eventually get trapped in a BSCC and visit
every state of it infinitely often.
\end{proof}

\paragraph{The bold monitor.}  We  bold monitor for chains with minimal probability
$\pmin$, shown in Algorithm \ref{alg:boldalpha}. For every $\run$ and $i \geq 1$, we define two random variables:
\begin{itemize}
\item $\strength_i(\run)$ is the strength of $\candidate _i(\run)$ in $\run$;
\item $\bad_i(\run)$ is \textbf{true} if $\candidate _i(\run)$ is a bad candidate, and \textbf{false} otherwise. 
\end{itemize}
Let $\alpha_0 := \max\{1, - 1/\log(1-\pmin)\}$. The lemma states that, for every $\alpha \ge \alpha_0$ and $\varepsilon > 0$, the runs that satisfy $\varphi$ and in which some bad candidate, say $K_i$, reaches a strength of at least $\alpha(i-\log \varepsilon)$, have probability at most $\varepsilon \pvarphi$. This leads to the following strategy for the controller: when the controller is considering the $i$-th candidate, abort only if the strength reaches $\alpha(i-\log\varepsilon)$.

\begin{lemma}
\label{lem:bold}
Let $\Mc$ be a finite-state Markov chain with minimum probability $\pmin$, let
$\varphi$ be an LTL formula with positive probability $\pvarphi$. %with probability $\pvarphi$ in $\Mc$
For every Markov chain $\Mc$ with minimal probability $\pmin$, for every $\alpha \ge \alpha_0$ and $\varepsilon > 0$: $$\pr \left[ \; \big\{ \run \mid \run \models \varphi \wedge \exists i \geq 1 \, . \, 
\bad_i(\run) \wedge \strength_i(\run) \geq  \alpha(i\rem{+j}\rc{-\log\varepsilon}) \big\}  \; \right] \leq \rem{2^{-j}}\rc{\varepsilon} \pvarphi$$
\end{lemma}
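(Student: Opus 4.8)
The plan is to prove the stronger conditional statement
$$\pr\big[\exists i\geq 1.\ \bad_i(\run)\wedge\strength_i(\run)\geq\alpha(i-\log\varepsilon)\ \big|\ \run\models\varphi\big]\ \leq\ \varepsilon,$$
from which the lemma follows upon multiplying by $\pr[\run\models\varphi]=\pvarphi$. By a union bound over $i$ it is enough to show that for every $i\geq 1$ and every integer $k\geq 1$
$$\pr\big[\bad_i(\run)\wedge\strength_i(\run)\geq k\ \big|\ \run\models\varphi\big]\ \leq\ (1-\pmin)^{k-1},$$
and then to substitute $k=\lceil\alpha(i-\log\varepsilon)\rceil$ and sum the resulting geometric series. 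This is the only place where the assumption $\alpha\geq\alpha_0$ enters: it gives $\alpha\log(1-\pmin)\leq -1$, i.e.\ $(1-\pmin)^{\alpha}\leq\tfrac12$, so that (with logarithms to base $2$ and assuming w.l.o.g.\ $\varepsilon\in(0,1)$) one has $(1-\pmin)^{k}\leq\big((1-\pmin)^{\alpha}\big)^{\,i-\log\varepsilon}\leq 2^{-i}\varepsilon$ and $\sum_{i\geq 1}(1-\pmin)^{k-1}=O(\varepsilon)$.

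For the conditional bound I would condition on $\{\run\models\varphi\}$ once and work inside the Markov chain $\widetilde{\Mcdra}$ obtained from $\Mcdra$ by keeping only the states $s$ with $\pr_s[\varphi]>0$ and replacing each transition probability $\Pm'(s,s')$ by $\Pm'(s,s')\cdot\pr_{s'}[\varphi]/\pr_s[\varphi]$. Conditioning a finite Markov chain on the tail event ``the BSCC eventually reached is good'' is well known to yield exactly this chain, and since the candidates of a run and their strengths are functions of the run, the conditional probability above equals the corresponding unconditional probability computed in $\widetilde{\Mcdra}$. In $\widetilde{\Mcdra}$ every run almost surely satisfies $\varphi$, so its BSCCs are precisely the good BSCCs of $\Mcdra$; in particular every bad candidate that arises in $\widetilde{\Mcdra}$ fails to be a BSCC of $\Mcdra$.

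The structural core of the argument is the claim that every bad candidate $K$ occurring in $\widetilde{\Mcdra}$ contains a state $s_K$ with a transition of $\widetilde{\Mcdra}$-probability at least $\pmin$ that leaves $K$. Since $K$ is strongly connected but not a BSCC of $\Mcdra$, it contains no nonempty $\Mcdra$-closed subset (such a subset would contain a BSCC $B\subseteq K$, forcing $K=B$ by strong connectedness). Put $v:=\max_{s\in K}\pr_s[\varphi]>0$ and $W:=\{s\in K:\pr_s[\varphi]=v\}$. Using the identity $\pr_s[\varphi]=\sum_{s'}\Pm'(s,s')\pr_{s'}[\varphi]$ for $s\in W$, the maximality of $v$ on $K$, and the fact that $W$ (being a subset of $K$) is not $\Mcdra$-closed, one finds some $s_K\in W$ having a successor $s'\notin K$ with $\pr_{s'}[\varphi]\geq v$; then the $\widetilde{\Mcdra}$-probability of $s_K\to s'$ equals $\Pm'(s_K,s')\cdot\pr_{s'}[\varphi]/v\geq\pmin$. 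Choosing, say, the lexicographically least such $s_K$ makes it a function of $K$.

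The conditional bound is then an application of the strong Markov property. Fix a bad candidate value $K$ and condition additionally on $\candidate_i=K$ and on the run up to the first time $\tau$ at which the candidate equals $K$; the candidate equals $K$ throughout a single time interval only, because edges of the support graph can only accumulate, so once $K$ ceases to be a bottom SCC it never becomes one again. Each time the run visits $s_K$ on that interval it independently takes the edge $s_K\to s'$ --- which adds a state outside $K$ to the support graph and hence destroys the candidate --- with probability at least $\pmin$. Since $\strength_i(\run)\geq k$ forces, by the definition of strength, that in some prefix within that interval every state of $K$ --- in particular $s_K$ --- occurs at least $k$ times, the interval must survive at least $k-1$ independent escape opportunities at $s_K$, so its probability is at most $(1-\pmin)^{k-1}$; summing over the pairwise disjoint bad candidate values $K$ (all of which are non-BSCCs of $\Mcdra$ inside $\widetilde{\Mcdra}$, so every case is covered) gives the claim. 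The delicate points I expect are the structural claim --- in particular that the $v$-maximising set cannot be $\Mcdra$-closed and therefore emits an edge whose probability survives the reweighting by $\pr_{s'}[\varphi]/\pr_{s_K}[\varphi]$ --- and the bookkeeping that upgrades ``occurs at least $k$ times'' to $k$ rather than $k-1$ independent escape opportunities, which is precisely the purpose of the asymmetric ``last element occurs at least $k+1$ times'' clause in the definition of strength and is what brings the bound from $O(\varepsilon\pvarphi)$ down to $\varepsilon\pvarphi$ exactly.
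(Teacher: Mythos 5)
Your proof is correct, and the skeleton (reduce to a per-candidate bound of order $(1-\pmin)^{k}$ conditioned on $\varphi$, union bound over $i$, plug in $k_i=\alpha(i-\log\varepsilon)$ and sum the geometric series using $(1-\pmin)^{\alpha}\le 1/2$) coincides with the paper's. Where you genuinely diverge is in how the conditioning on $\{\run\models\varphi\}$ is discharged. The paper never changes the chain: it decomposes the event $\kcand{k}(\candidate)$ over which exit transition $t\to t'$ first leaves $\candidate$, applies the Markov property twice to replace $\pr_s[\run\models\varphi\mid\kcand k(\candidate),\,\candidate\text{ exited by }t\to t']$ by $\pr_{t'}[\run\models\varphi]$, and lets the weighted sum over all exit edges telescope back to $\pr_s[\run\models\varphi]$, so the factor $(1-\pmin)^k$ pops out of the unconditional escape argument edge by edge. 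You instead pass to the Doob $h$-transform $\widetilde{\Mcdra}$ and reduce to an unconditional escape argument in the reweighted chain; the price is your structural lemma that every bad candidate $K$ retains an exit edge of reweighted probability at least $\pmin$, which you prove correctly by showing the set $W$ of $\pr_\cdot[\varphi]$-maximisers in $K$ cannot be closed (else $K$ would contain, hence equal, a BSCC). That lemma is the honest content either way --- the paper's telescoping sum is doing the same reweighting implicitly, spread over all exit edges rather than concentrated on one --- so your version is arguably more conceptual and reusable, while the paper's is shorter and avoids any claim about individual edges of the conditioned chain. Two small points: the initial distribution must also be reweighted by $\pr_s[\varphi]/\pvarphi$ for the $h$-transform identification to be exact, and your headline per-candidate bound $(1-\pmin)^{k-1}$ would only yield $\varepsilon\pvarphi/(1-\pmin)$; you correctly flag that the ``last element occurs at least $k+1$ times'' clause upgrades this to $k$ escape opportunities and hence to $(1-\pmin)^{k}$, which is exactly how the paper's first lemma handles the case where $t$ is the final state of the path.
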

\begin{proof}
	The proof is quite technical and can be found in Appndix~\ref{app:L2} and is inspired by \cite{DacaHKP17}.
	The main technical difficulty, compared to \cite{DacaHKP17} is omnipresent conditioning on the property $\varphi$ being satisfied.
	This also allows for strengthening the bound by the factor of the probability to satisfy it.
\end{proof}

The monitor is parametric in $\alpha$ and~$\varepsilon$. The variable $C$ stores the current candidate, and is used to detect when the candidate changes. The variable $i$ maintains the index of the current candidate, i.e., in every reachable configuration of the algorithm, if $C \neq \bot$ then $C:= \candidate_i$. 

\begin{algorithm}[ht]
  \caption{\textsc{BoldMonitor}$_{\alpha,\epsilon}$}
  \label{alg:boldalpha}
\begin{algorithmic}[1]
\While {\textbf{true}}
\State $\path \gets \lambda$ \Comment{Initialize path}
\State $C \gets \bot$, $i \gets 0$  \Comment{Initialize candidate and candidate counter}
\Repeat
\State $\path \gets \path \concat \nextState(\path)$   \Comment{Extend path}
\If {$\bot \neq \candidate(\path) \neq C$} 
\State $C \gets \candidate(\path)$; $i \gets i+1$ \Comment{Update candidate and candidate counter} 
\EndIf 
\Until {$\textsc{Bad}(\path)$ \textbf{and} $\strength(\path) \geq  \alpha(i\rem{+j}\rc{-\log\varepsilon})$} %\Comment{Reset condition} 
\EndWhile
\end{algorithmic}
\end{algorithm}

The infinite-state Markov chain $\mathcal{B}$ of the bold monitor is defined as the chain $\mathcal{C}$ for the
cautious monitor; we just replace the condition that $\pi$ is bad (and thus has strength at least 1) by the condition that $\candidate(\path)$ is bad and has strength $\alpha(i - \log \varepsilon)$. The random variable $R$ and the event $S_\varphi$ are also defined as for \textsc{CautiousMonitor}.

\begin{theorem}
\label{thm:bold}
Let $\Mc$ be a finite-state Markov chain with minimum probability $\pmin$, and let
$\varphi$ be an LTL formula with probability $\pvarphi > 0$ in $\Mc$.
Let $\mathcal{B}$ be the Markov chain, defined as above, corresponding to the execution of \textsc{BoldMonitor}$_{\alpha,\epsilon}$ on $\Mcdra$, where $\alpha \ge \alpha_0$ and $\varepsilon > 0$. We have:
\begin{itemize}
	\item[(a)] The random variable $R$ is geometrically distributed, with parameter (success probability) at least $\pvarphi(1-\varepsilon)$. Hence, we have $\pr_\mathcal{B}[R < \infty] = 1$ and $\expected_\mathcal{B}(R) \leq %\frac{2}{\pvarphi}$  \todoin{actually$\leq
	 \rc{1/\pvarphi(1-\varepsilon)}$ for every $\varepsilon >0$.
	\item[(b)] $\pr_\mathcal{B}[S_\varphi  | R < \infty] = \pr_\mathcal{B}[S_\varphi]=1$.
\end{itemize}
\end{theorem}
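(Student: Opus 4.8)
The plan is to mirror the structure of the correctness argument for \textsc{CautiousMonitor}, replacing the two properties of Proposition \ref{prop:suspect} by the refined quantitative information supplied by Lemma \ref{lem:bold} and Lemma \ref{lem:strength-increases}. For part (a), I would first argue that each ``attempt'' of the bold monitor (the computation between two consecutive resets, or after the last one) independently either ends in a reset or produces a run satisfying $\varphi$ that is never aborted. More precisely, consider a single attempt, i.e.\ a run $\run$ of $\Mcdra$ sampled from its initial distribution. By Lemma \ref{lem:strength-increases}, w.p.1 the final candidate $\candidate_\infty(\run)$ is a BSCC; if this BSCC is good then $\run \models \varphi$, and if it is bad then $\run \not\models\varphi$. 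The attempt is aborted precisely if at some point $\textsc{Bad}(\path)$ holds and $\strength(\path) \ge \alpha(i - \log\varepsilon)$ for the current candidate index $i$. I would split into two cases. If $\run \not\models\varphi$, then its final candidate $\candidate_k$ is bad; by Lemma \ref{lem:strength-increases} some prefix $\path_m$ has $\candidate(\path_m) = \candidate_k$ with $\strength(\path_m) \ge \alpha(k - \log\varepsilon)$ (taking $m$ large enough), and at that point $i = k$, so the abort condition fires — hence bad runs are aborted w.p.1. If $\run \models\varphi$, the attempt is \emph{not} aborted unless some bad candidate $\candidate_i$ along the way reaches strength $\ge \alpha(i-\log\varepsilon)$; by Lemma \ref{lem:bold} the probability of this event (intersected with $\run\models\varphi$) is at most $\varepsilon\pvarphi$. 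Therefore the probability that a single attempt terminates without a reset is at least $\pr[\run\models\varphi] - \varepsilon\pvarphi = \pvarphi(1-\varepsilon) > 0$.

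Given this, part (a) follows by the Markov chain structure of $\mathcal{B}$: the attempts are i.i.d., each succeeding (no reset) with probability $p \ge \pvarphi(1-\varepsilon)$, so $R$, the number of resets, is the number of failures before the first success and is geometrically distributed with success parameter $p$. Hence $\pr_\mathcal{B}[R<\infty] = 1$ and $\expected_\mathcal{B}(R) = (1-p)/p \le 1/p \le 1/(\pvarphi(1-\varepsilon))$. For part (b), conditioned on $R < \infty$ (which has probability $1$), the run of $\mathcal{B}$ after the last reset is, by definition of ``last reset'', an attempt that is never aborted; by the case analysis above, w.p.1 such an attempt has a good final candidate — because a bad final candidate would force an abort w.p.1 — and therefore satisfies $\varphi$. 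This gives $\pr_\mathcal{B}[S_\varphi \mid R < \infty] = 1$, and since $\pr_\mathcal{B}[R<\infty]=1$ also $\pr_\mathcal{B}[S_\varphi] = 1$.

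The main obstacle is making the ``attempts are i.i.d.'' claim fully rigorous on the infinite-state chain $\mathcal{B}$: one must check that the probability of a reset within an attempt does not depend on the reset counter $r$ (immediate from the definition of $\Pm_\mathcal{B}$, which ignores $r$ except to increment it) and that conditioning on reaching the state $\tup{\lambda, r}$ resets the distribution over continuations to exactly the initial distribution of $\Mcdra$ — i.e.\ a renewal/strong-Markov argument. A secondary subtlety is the direction of Lemma \ref{lem:bold} needed here: the lemma bounds the probability that a $\varphi$-run has \emph{some} bad candidate reaching the threshold, and I must make sure that ``the attempt is not aborted'' is implied by the complement of that event — which holds because the only other way to abort is via the \emph{final} candidate being bad, impossible when $\run\models\varphi$ since then the final candidate is good. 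Everything else is bookkeeping with the definitions of strength and the sequence $(\candidate_i)$ already set up in the excerpt.
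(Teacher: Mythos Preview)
Your proposal is correct and follows essentially the same approach as the paper's proof: for (a) you use Lemma~\ref{lem:strength-increases} to show bad runs are almost surely reset and Lemma~\ref{lem:bold} to bound the probability that a good run is reset by~$\varepsilon$, yielding a per-attempt success probability of at least $\pvarphi(1-\varepsilon)$ and hence a geometric distribution for~$R$; for (b) you argue that a non-aborted attempt must have a good final candidate. Your write-up is in fact more careful than the paper's terse version, explicitly addressing the renewal/i.i.d.\ structure of attempts and the precise match between the abort condition and the event in Lemma~\ref{lem:bold}.
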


\begin{proof}
%	(a) For every $j > 1$, define the Bernoulli variable $\Reset_j$ as follows:
%	$\Reset_j(\run) = 1$ if{}f the while loop completes $j$ iterations (i.e., in the $j$-th iteration of the while loop the repeat loop terminates). We have:	
%	$$\begin{array}{rll}
%	& \pr \left[\Reset_j = 0 \right]  \\
%	= & 1 - \pr \left[\Reset_j = 1\right] \\
%	= & 1 - (\pr \left[\Reset_j = 1 , \run \models \varphi \right] + \pr \left[\Reset_j =1 , \run \not\models \varphi \right]) \\
%	\geq & 1 - (\rem{2^{-j}}\rc{\varepsilon}\pvarphi + (1 - \pvarphi)) &\text{(by Lemma \ref{lem:bold})}\\
%	\geq & \pvarphi (1- \rem{2^{-j}}\rc{\varepsilon}) 
%	\end{array}$$
%That is, the probability that the $j$-th iteration does not terminate is at least $\pvarphi(1 - \rem{2^{-j}}\rc{\varepsilon})$. \rem{, hence at least $\pvarphi/2$.} 
%So some iteration does not terminate almost surely, which implies $\pr[R < \infty] = 1$.
%
    (a) By Lemma~\ref{lem:strength-increases}, almost all bad runs are reset.
    By Lemma~\ref{lem:bold}, runs, conditioned under being good, are reset with probability at most~$\varepsilon$.
    It follows that the probability that a run is good and not reset is at least $\pvarphi(1-\varepsilon)$.
	(b)~In runs satisfying $R < \infty$, the suffix after the last reset almost surely reaches a BSCC of  $\mathcal{M}\otimes \mathcal{A}$ and visits all its states infinitely often, increasing the strength of the last candidate beyond any bound. So runs satisfying $R < \infty$ belong to $S_\varphi$ with probability 1. 	
%	(c) Each reset has a probability of at least 
%	$\rem{\pvarphi(1-2^{-j}\varepsilon)\geq}\rc{\pvarphi(1-\varepsilon)}$ of being the last reset. 
%	Hence the mean of the respective geometric distribution is \rc{$1/\pvarphi(1-\varepsilon)$.}
\end{proof}

\paragraph{Performance.} Recall that $T$ is the random variable that assigns to a run the number of steps until the last reset.
Let $T_j$ be the number of steps between the $j$-th and $(j+1)$th reset. Observe that all the $T_j$ are identically distributed.
We have  $T_j = T_{j}^\bot + T_{j}^C$, where $T_j^\bot$ and $T_{j}^C$ are the number of prefixes $\path$ such that $K(\path) = \bot$
(no current candidate) and $K(\path) \neq \bot$ (a candidate), respectively.
By deriving bounds on $\expected(T_j^\bot)$ and $\expected(T_j^C)$, we obtain:

\begin{theorem} \label{thm:performance-known-pmin}
Let $\Mc$ be a finite-state Markov chain with $n$ states, minimum probability $\pmin$, and maximal SCC size $\maxsizescc$. Let
$\varphi$ be an LTL formula with probability $\pvarphi > 0$ in $\Mc$.
Let $\alpha \ge \alpha_0$ and $\varepsilon > 0$.
Let $T$ be the number of steps taken by \textsc{BoldMonitor}$_{\alpha,\epsilon}$ until the 
last reset (or $\infty$ if there is no last reset). We have:
\begin{equation}
\label{eq:Tbound6}
\begin{aligned}    
\expected(T)  & \leq  \frac{1}{\pvarphi(1 - \varepsilon)}  \cdot 2 n \alpha  (n-  \log\varepsilon) \maxsizescc \left( \frac{1}{\pmin}\right)^{\maxsizescc}
\end{aligned}
\end{equation}
\end{theorem}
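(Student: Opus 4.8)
The plan is to reduce $\expected(T)$ to the expected length of a single \emph{attempt} --- one iteration of the outer loop of \textsc{BoldMonitor}, i.e.\ one run segment of $\Mcdra$ between two consecutive resets --- and then to bound that length by tracking how the current candidate develops. After every reset the monitor restarts from $\init'$ with candidate counter~$0$, so the trajectory of the chain $\mathcal B$ decomposes into i.i.d.\ attempts, and the $T_j$ (the inter-reset intervals) are identically distributed, each with the law of the length of an attempt conditioned on that attempt ending in a reset. By Theorem~\ref{thm:bold}(a), $R$ is geometric with success parameter $p\ge\pvarphi(1-\varepsilon)$, so $\expected(R)\le\frac{1}{\pvarphi(1-\varepsilon)}$; and writing $T$ as the sum over $j\ge1$ of the $j$-th attempt's length on the event that at least $j$ resets occur --- an event determined by the first $j$ attempts alone --- a short computation gives $\expected(T)=\expected(R)\cdot\expected(T_j)$. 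It therefore suffices to show $\expected(T_j)=\expected(T_j^\bot)+\expected(T_j^C)\le 2n\alpha(n-\log\varepsilon)\maxsizescc(1/\pmin)^{\maxsizescc}$, where $T_j^\bot$ and $T_j^C$ count the steps of the attempt with, respectively, no current candidate and a current candidate.

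To bound $\expected(T_j)$, fix a resetting attempt and let $\candidate_1,\dots,\candidate_{i^\ast}$ be the candidates it meets, with $\candidate_{i^\ast}$ the bad candidate whose strength reaches the threshold $\alpha(i^\ast-\log\varepsilon)$ and triggers the reset. Cut the attempt into epochs: epoch~$i$ is the (possibly empty) stretch with no current candidate that precedes the birth of $\candidate_i$, followed by the stretch during which the current candidate equals $\candidate_i$; summing over epochs recovers the decomposition $T_j^\bot+T_j^C$. One epoch is short in expectation: every SCC of the current path graph lies inside an SCC of $\Mcdra$, so while there is no current candidate the chain can, from wherever it is, follow a shortest closed walk through the bottom SCC of the path graph it currently sits in --- of length $\le\maxsizescc$, hence of probability $\ge\pmin^{\maxsizescc}$ --- which closes a candidate, so the expected no-candidate time per epoch is $O(\maxsizescc(1/\pmin)^{\maxsizescc})$; and once $\candidate_i$, of size $\le\maxsizescc$, is the current candidate, each ``round'' that re-exits every state of $\candidate_i$ raises $\strength(\cdot)$ by one at the same expected cost, so after at most $\alpha(i-\log\varepsilon)$ such rounds either the strength has reached the threshold (the monitor resets) or the chain has escaped $\candidate_i$ (the epoch ends) --- and since the attempt resets, $\candidate_i$ cannot be a good BSCC of $\Mcdra$, so a good $\candidate_i$ is escaped too. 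Thus epoch~$i$ has expected length $O\big((1+\alpha(i-\log\varepsilon))\maxsizescc(1/\pmin)^{\maxsizescc}\big)$.

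Summing over epochs, $\expected(T_j)=\sum_{i\ge1}\expected[\ell_i\,\mathbf 1(i\le i^\ast)]$ where $\ell_i$ is the length of epoch~$i$, so one must control how many candidates an attempt can produce. This is the combinatorial heart of the proof: exploiting the structural behaviour of candidates --- each set of states is a candidate at most once, and the ways in which a new candidate can differ from the old one are limited --- one shows (conditioned on the attempt resetting) that $i^\ast$ stays of order $n$, concretely that $\expected\big[\sum_{i=1}^{i^\ast}(i-\log\varepsilon)\big]\le n(n-\log\varepsilon)$, in the spirit of \cite{DacaHKP17}. Granting this, since the no-candidate stretch of each of the $i^\ast$ epochs contributes $O(\maxsizescc(1/\pmin)^{\maxsizescc})$ and its $\candidate_i$-stretch contributes $O(\alpha(i-\log\varepsilon)\maxsizescc(1/\pmin)^{\maxsizescc})$, and $i^\ast\le\sum_{i=1}^{i^\ast}(i-\log\varepsilon)$, we obtain $\expected(T_j)\le(1+\alpha)\maxsizescc(1/\pmin)^{\maxsizescc}\,\expected\big[\sum_{i=1}^{i^\ast}(i-\log\varepsilon)\big]\le 2n\alpha(n-\log\varepsilon)\maxsizescc(1/\pmin)^{\maxsizescc}$ (using $i-\log\varepsilon\ge1$ and $\alpha\ge1$); multiplying by $\expected(R)$ yields the theorem.

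The two steps I expect to be the real work are exactly this bound on the number of candidates, and making every hitting-time and round-completion estimate \emph{uniform over the unknown chain}: each must be reduced to ``a specific walk of length $\le\maxsizescc$ inside the relevant SCC of $\Mcdra$ has probability $\ge\pmin^{\maxsizescc}$'', and one has to check that conditioning on the attempt resetting, and on the chain remaining inside the current candidate, does not weaken these estimates. With both in hand the remaining assembly is routine arithmetic.
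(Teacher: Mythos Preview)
Your overall strategy matches the paper's: write $T=\sum_j T_j$, split each attempt into $T_j^\bot+T_j^C$, and bound the candidate part by (number of candidates) $\times$ (threshold) $\times$ (cost of one strength increment, itself $\le\maxsizescc\,\pmin^{-\maxsizescc}$). Two points deserve correction. First, your per-epoch bound on the no-candidate time via ``follow a closed walk of length $\le\maxsizescc$'' fails when the current state lies in a trivial SCC of $\Mcdra$ (a transient state with no self-loop has no closed walk through it at all). The paper replaces this by a one-line deterministic observation you are missing: $K(\path)=\bot$ holds \emph{only} when the last state of $\path$ is being visited for the first time, so $T_j^\bot\le n$ outright---no probabilistic estimate, no per-epoch accounting. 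Second, what you call ``the combinatorial heart'' (bounding $i^\ast$) is in fact an easy deterministic count: each candidate birth either adds a self-loop to a freshly discovered state (at most $n$ times) or adds an edge from the current candidate to a previously visited state outside it, which strictly reduces the number of roots in the SCC stack of $G_\path$ (at most $n-1$ times); hence there are at most $2n-1$ candidates. The paper simply asserts ``at most $n$'', which is a harmless imprecision for the stated bound.

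On the bookkeeping side, the paper also sidesteps your conditioning worries. Instead of computing $\expected(T_j\mid j\le R)$ and then arguing that conditioning on a reset does not inflate hitting times, it introduces $I_{ji,k}$ (the number of steps between resets $j$ and $j{+}1$ spent with candidate $K_i$ at strength~$k$), bounds $\expected(I_{0i,k})\le\maxsizescc\,\pmin^{-\maxsizescc}$ \emph{unconditionally}, and then uses only the geometric tail $\expected(I_{(j+1)i,k})\le(1-\pvarphi(1-\varepsilon))\,\expected(I_{ji,k})$ from Theorem~\ref{thm:bold}(a) to sum over~$j$. No conditioning on a rare event ever enters, so the uniformity concerns you flag in your last paragraph do not arise.
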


\noindent Here we observe the main difference with \textsc{CautiousMonitor}: Instead of the exponential dependence on $n$ of Theorem \ref{thm:bold}, we only have an exponential dependence on $\maxsizescc$. So for chains satisfying 
$\maxsizescc <\!\!< n$ the bold controller performs much better than the cautious one.

\subsection{General chains}
\label{subsec:arbitrarychain}

We adapt \textsc{BoldMonitor} so that it works for arbitrary finite-state Markov chains, at the price of a performance penalty.
The main idea  is very simple: given any non-decreasing sequence 
$\{\alpha_n\}_{n=1}^\infty$ of natural numbers such that $\alpha_1 = 1$ and $\lim_{n\rightarrow \infty} \alpha_n = \infty$, we sample as in 
\textsc{BoldMonitor}$_{\alpha,\epsilon}$ but, instead of using the same value $\alpha$ for every sample, we use $\alpha_j$ for the $j$-th sample
(see Algorithm~\ref{alg:bold}). The intuition is that $\alpha_j \geq \alpha_0$  holds from some index $j_0$ onwards, and so, by the previous analysis, after the $j_0$-th reset the monitor a.s.\ only executes a finite number of resets. 
Let $\textsc{Sample}(\alpha)$ be the body of the while loop of \textsc{BoldMonitor}$_{\alpha,\varepsilon}$ for a given value of $\alpha$. 
\begin{algorithm}[t]
  \caption{\textsc{BoldMonitor}$_\varepsilon$ for $\{\alpha_n\}_{n=1}^\infty$}
  \label{alg:bold}
\begin{algorithmic}[1]
\State $j \gets 0$ 
\While {\textbf{true}}
\State $j = j+1$
\State \textsc{Sample}$(\alpha_j)$
\EndWhile
\end{algorithmic}
\end{algorithm}
More formally, the correctness follows from the following two properties.
\begin{itemize}
\item For every $j \geq 1$, if \textsc{Sample}$(\alpha_j)$ does not terminate then it executes a good run a.s..\\
Indeed, if \textsc{Sample}$(\alpha_j)$ does not terminate then it a.s.\ reaches a BSCC of $\Mcdra$ and visits all its states infinitely often.  So from some moment on $\candidate(\path)$ is and remains equal to this BSCC, and $\strength(\path)$ grows beyond any bound. Since \textsc{Sample}$(\alpha_j)$ does not terminate, the BSCC is good, and it executes a good run.
\item If $\alpha_j \ge \alpha_0$ then the probability that \textsc{Sample}$(\alpha_j)$ does not terminate is at least \rc{$\varepsilon \pvarphi$}. \\
Indeed, by Lemma \ref{lem:bold}, if $\alpha_j \ge \alpha_0$, the probability is already at least \rc{$\varepsilon \pvarphi$}. Increasing $\alpha$ strengthens the exit condition of the until loop. So the probability that the loop terminates is lower, and the probability of non-termination higher.
\end{itemize} 
These two observations immediately lead to the following proposition:
\begin{proposition}
\label{prop:bold2}
Let $\Mc$ be an arbitrary finite-state Markov chain, and let
$\varphi$ be an LTL formula such that $\pvarphi  := \pr[\mathcal{M} \models \varphi] > 0$.
Let $\mathcal{B}$ be the Markov chain corresponding to the execution of \textsc{BoldMonitor}$_\varepsilon$ on $\Mcdra$ with sequence $\{\alpha_n\}_{n=1}^\infty$.
Let $\pmin$ be the minimum probability of the transitions of $\Mc$ (which is unknown to \textsc{BoldMonitor}$_\varepsilon$). We have
\begin{itemize}
\item[(a)] $\pr_\mathcal{B}[R < \infty] = 1$.
\item[(b)] $\pr_\mathcal{B}[S_\varphi  | R < \infty] = \pr_\mathcal{B}[S_\varphi]=1$.
\item[(c)] $\expected(R) \leq  j_{\text{min}} + \rc{1 / \pvarphi(1 - \epsilon)}$, where $j_{\text{min}}$ is the smallest index $j$ such that 
$\alpha_j \geq \alpha_0$.
\end{itemize}
\end{proposition}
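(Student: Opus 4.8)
The plan is to obtain all three claims by assembling the two displayed observations that immediately precede the proposition, using essentially no new ingredient. The first thing I would fix is the right probabilistic picture. Since \textsc{BoldMonitor}$_\varepsilon$ reinitialises the path (and the candidate counter) after each reset, the successive executions of \textsc{Sample} are mutually independent, and the $j$-th one is a fresh run of $\Mcdra$ monitored exactly as in \textsc{BoldMonitor}$_{\alpha_j,\varepsilon}$. Let $J$ be the index of the first sample that fails to terminate (with $J = \infty$ if every sample terminates); then the number of resets is $R = J-1$, so it suffices to control $J$. Note also that $j_{\text{min}}$ is finite: $\alpha_0 = \max\{1, -1/\log(1-\pmin)\} < \infty$ because $\pmin > 0$, and $\alpha_n \to \infty$.

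For (a) I would use the second observation. For every $j \ge j_{\text{min}}$ we have $\alpha_j \ge \alpha_0$, so, exactly as in the proof of Theorem~\ref{thm:bold}(a) (Lemma~\ref{lem:bold} together with Lemma~\ref{lem:strength-increases}), the probability that \textsc{Sample}$(\alpha_j)$ fails to terminate is at least $\pvarphi(1-\varepsilon) > 0$. By independence of the samples, $\pr_\mathcal{B}[J > j_{\text{min}} + k] \le (1 - \pvarphi(1-\varepsilon))^{k+1} \to 0$ as $k \to \infty$, hence $J < \infty$ almost surely, i.e.\ $\pr_\mathcal{B}[R < \infty] = 1$.

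For (b), on the event $\{R < \infty\}$ the $J$-th sample runs forever; by the first observation it a.s.\ reaches a good BSCC of $\Mcdra$ and visits all its states infinitely often, so the suffix of the run after the last reset a.s.\ satisfies $\varphi$, giving $\pr_\mathcal{B}[S_\varphi \mid R < \infty] = 1$. Since $S_\varphi \subseteq \{R < \infty\}$ by the definition of $S_\varphi$ and $\pr_\mathcal{B}[R < \infty] = 1$ by (a), the unconditional probability equals the conditional one, so $\pr_\mathcal{B}[S_\varphi] = 1$ as well. For (c), I would write $\expected(R) = \expected(J) - 1 = \sum_{k \ge 0} \pr_\mathcal{B}[J > k] - 1$, bound the first $j_{\text{min}}$ terms ($k = 0, \dots, j_{\text{min}}-1$) trivially by $1$ each, and bound the tail by $\sum_{m \ge 1} (1 - \pvarphi(1-\varepsilon))^m = \frac{1}{\pvarphi(1-\varepsilon)} - 1$, which yields $\expected(R) \le j_{\text{min}} + \frac{1}{\pvarphi(1-\varepsilon)}$.

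I do not expect a genuine obstacle here — the proposition is an assembly of facts already proved. The only points that need a line of care are the independence of the successive samples (which legitimises the geometric tail bound in (a) and the sum in (c)) and the fact that the per-sample non-termination probabilities are only bounded \emph{below} by $\pvarphi(1-\varepsilon)$ and need not be equal across $j$; this is exactly what makes the upper bounds go through, and the remark in the excerpt that increasing $\alpha$ only strengthens the exit condition already confirms that non-termination becomes no less likely as $j$ grows.
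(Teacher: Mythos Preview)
Your proposal is correct and follows exactly the route the paper intends: the text merely states that ``these two observations immediately lead to the following proposition,'' and you have spelled out that immediacy, using independence of the successive samples to get the geometric tail bound for (a) and (c), and the first observation for (b). Your care about the per-sample non-termination probabilities being only lower-bounded (not equal) is appropriate, and your tail-sum computation in (c) in fact gives a constant slightly better than required.
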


\paragraph{Performance.} Difference choices of the sequence $\{\alpha_n\}_{n=1}^\infty$ lead to versions of \textsc{BoldMonitor}$_\varepsilon$ with different
performance features. Intuitively,  if the sequence grows very fast, then $j_{\text{min}}$ is very small, and the expected number of resets $\expected(R)$ 
is only marginally larger than the number for the case in which the monitor knows $\pmin$. However, in this case
the last $\rc{1 / \pvarphi(1 - \epsilon)}$ aborted runs are performed for very large values $\alpha_j$, and so they take many steps.
If the sequence grows slowly, then the opposite happens; there are more resets, but aborted runs have shorter length. Let us analyze two extreme cases:
$\alpha_j := 2^j$ and $\alpha_j := j$. 

Denote by $f(\alpha)$ the probability that a run is reset, i.e., the probability that a call \textsc{Sample}$(\alpha)$ terminates.
Let further $g(\alpha)$ denote the expected number of steps done in \textsc{Sample}$(\alpha)$ of a run that is reset (taking the number of steps as $0$ if the run is not reset).
According to the analysis underlying Theorem~\ref{thm:performance-known-pmin}, for $\alpha \ge \alpha_0$ we have $g(\alpha) \le c \alpha$ with $c := 2 n  (n-  \log\varepsilon) \maxsizescc\,\pmin^{-\maxsizescc}$.
We can write $T = T_1 + T_2 + \cdots$, where $T_j = 0$ when either the $j$-th run or a previous run is not aborted, and otherwise $T_j$ is the number of steps of the $j$-th run.
For $j \le j_{\text{min}}$ we obtain $\expected(T_j) \le g(\alpha_{j_\text{min}})$ and hence we have:
\begin{align*}
\expected(T) \ &=\ \sum_{j=0}^\infty \expected(T_j) \ \le\ j_{\text{min}} g(\alpha_{j_{\text{min}}}) + \sum_{i=0}^\infty f(\alpha_{j_{\text{min}}})^i g(\alpha_{j_{\text{min}}+i})
\end{align*}
By Theorem~\ref{thm:bold}(a) we have $f(\alpha_{j_{\text{min}}}) \le 1 - \pvarphi(1-\varepsilon)$.
It follows that choosing $\alpha_j := 2^j$ does not in general lead to a finite bound on $\expected(T)$.
Choosing instead $\alpha_j := j$, we get
\begin{align*}
\expected(T) \ &\le \ c j_{\text{min}}^2 + \sum_{i=0}^\infty (1 - \pvarphi(1-\varepsilon))^i c (j_{\text{min}}+i) \\
&\le \ \left( j_{\text{min}}^2 + \frac{j_{\text{min}}}{\pvarphi(1-\varepsilon)} + \frac{1}{(\pvarphi(1-\varepsilon))^2}\right) c\,,
\end{align*}
where $j_\text{min}$ can be bounded by $j_\text{min} \le - 1/\log(1-\pmin) + 1 \le 1/\pmin$.
So with $c = 2 n  (n-  \log\varepsilon) \maxsizescc\,\pmin^{-\maxsizescc}$ we arrive at
\begin{equation}
 \expected(T) \ \le \ \left( \frac{1}{\pmin^2} + \frac{1}{\pmin \pvarphi(1-\varepsilon)} + \frac{1}{(\pvarphi(1-\varepsilon))^2}\right) 2 n  (n-  \log\varepsilon) \maxsizescc\,\pmin^{-\maxsizescc}\,,
\end{equation}
a bound broadly similar to the one from Theorem~\ref{thm:performance-known-pmin}, but with the monitor not needing to know $\pmin$. 
\subsection{Implementing the bold monitor} 
A straightforward implementation of the bold monitor in which the candidate $\candidate(\path)$ and its strength are computed anew 
each time the path is extended is very inefficient. We present a far more efficient algorithm that continuously maintains the candidate of the current 
candidate and its strength. The algorithm runs in $O(n \log n)$ amortized time for a path $\path$ of length $n$, and uses $O(s_n \log s_n)$ space, 
where $s_n$ denotes the number of states visited by $\path$ (which can be much smaller than $n$ when states are visited multiple times). 

Let $\path$ be a path of $\mathcal{M}\otimes \mathcal{A}$, and let $s \in \path$.
(Observe that $s$ now denotes a state of $\mathcal{M}\otimes \mathcal{A}$, not of $\mathcal{M}$.) We let $G_\path = (V_\path, E_\path)$ 
denote the subgraph of $\mathcal{M}\otimes \mathcal{A}$ where $V_\path$ and $E_\path$ are the sets of states and edges visited by $\path$, respectively. 
Intuitively, $G_\path$ is the fragment of $\mathcal{M}\otimes \mathcal{A}$ explored by the path $\path$. We 
introduce some definitions. 
\begin{itemize}
\item The \emph{discovery index} of a state $s$, denoted $d_\path(s)$, is the number of states that appear in the prefix of $\path$ ending 
with the first occurrence
of $s$. Intuitively, $d_\path(s) = k$ if $s$ is the $k$-th state discovered by $\path$. Since different states have different discovery times, 
and the discovery time does not change when the path is extended, we also call $d_\path(s)$ the \emph{identifier} of $s$. 
\item A \emph{root} of $G_\path$ is a state $r \in V_\path$ such that $d_\path(r) \leq d_\path(s)$ for every state  $s \in \scc_\path(r)$,
where $\scc_\path(r)$ denotes the SCC of $G_\path$ containing $s$. Intuitively, $r$ is the first state of $\scc_\path(r)$ visited by $\path$.
\item The \emph{root sequence} $R_\path$ of $\path$ is the sequence of roots of $G_\path$, ordered by ascending discovery index. 
\item Let $R_\path = r_1 \, r_2 \cdots r_m$. We define the sequence $S_\path = S_\path(r_1) \, S_\path(r_2) \cdots S_\path(r_m)$ of sets, where

$$S_\path(r_i) := \{s \in V_\path \mid  d_\path(r_i) \leq d_\path(s) < d_\path(r_{i+1}) \}$$
\noindent for every $1 \leq i < m$, i.e., $S_\path(r_i)$ is the set of states discovered after $r_i$ (including $r_i$) and before $r_{i+1}$ (excluding $r_i$); and
$$S_\path(r_m) := \{s \in V_\path \mid  d_\path(r_m) \leq d_\path(s)\} \ . $$

\item $\mybirthday_\path$ is defined as $\bot$ if $\candidate(\path) = \bot$, and as the length of the shortest prefix $\path'$ of $\path$ such that $\candidate(\path') = \candidate(\path)$ otherwise. Intuitively, $\mathit{Birthday}_\path$ is the time at which the current candidate of $\path$ was created.

\item For every state $s$ of $\path$, let $\path_s$ be the longest prefix of $\pi$ ending at $s$. 
We define $\myvisits_\path(s)$  as the pair $(\mybirthday_{\path_s}, v)$, where $v$ is $0$ if $\mybirthday_{\path_s}=\bot$, and $v$ is the number of times $\pi_s$ has visited $s$ since $\mybirthday_{\path_s}$ otherwise.  We define a total order on these pairs: $(b, v) \preceq (b',v')$ if{}f $b > b'$ (where $\bot > n$ for every number $n$), or $b=b'$ and $v \leq v'$. Observe that, if $\path$ has a candidate, then the smallest pair w.r.t. $\preceq$ corresponds to the state that is among the states visited since the creation of the candidate, and has been visited the least number of times.
\end{itemize}

\noindent The following lemma is an immediate consequence of the definitions.  
\begin{lemma}
Let $G_\path = (V_\path, E_\path)$. The SCCs of $G_\path$ are the sets of $S_\path$. Further, let 
$(b,v)= \min \{ \myvisits_\path(s) \mid s \in V_\path \}$, where the minimum is over $\preceq$. We have $\strength(\path) = v$.
\end{lemma}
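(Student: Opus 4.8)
The plan is to prove the two assertions of the lemma separately, each as a routine unpacking of the definitions introduced in the ``Implementing the bold monitor'' subsection, combined with the notion of candidate and strength from earlier sections. First I would establish that the SCCs of $G_\path$ are exactly the sets in the sequence $S_\path$. The key fact is that when running a DFS-style exploration along $\path$, the root of an SCC is the first state of that SCC discovered, and every state $s$ lies in the same SCC as the last root $r_i$ with $d_\path(r_i) \leq d_\path(s)$. One direction is easy: if $s \in \scc_\path(r_i)$ then by minimality of the discovery index of a root, $d_\path(r_i) \leq d_\path(s)$, and $s$ is discovered before the next root $r_{i+1}$ is discovered (since $r_{i+1}$ roots a different SCC, and once we enter $\scc_\path(r_i)$ via $r_i$ we cannot discover a state of a later SCC before finishing — here one has to be slightly careful, since $G_\path$ is the graph of a \emph{path} and SCCs are detected incrementally, so the cleanest argument is by induction on the length of $\path$, showing that $S_\path$ correctly partitions $V_\path$ into SCCs and is stable/refined as $\path$ grows). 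For the converse, if $s, s'$ both lie in $S_\path(r_i)$, one shows there is a cycle through $r_i$ containing both; this again follows by the incremental-detection invariant.

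Next I would handle the strength claim $\strength(\path) = v$, where $(b,v) = \min_\preceq \{\myvisits_\path(s) \mid s \in V_\path\}$. I would first dispose of the case $\candidate(\path) = \bot$: then no state $s$ has $\mybirthday_{\path_s}$ equal to the current $\mybirthday_\path$ (which is $\bot$), so... actually more carefully, I would argue that in this case the minimum pair has $b = \bot$ hence $v = 0 = \strength(\path)$, consistent with the convention that $\strength$ returns $0$ when there is no candidate. When $\candidate(\path) \neq \bot$: by the observation recorded right after the definition of $\myvisits$, the minimal pair w.r.t.\ $\preceq$ corresponds to a state $s^*$ that (i) belongs to the candidate — because $b$ is maximized first, and states in the candidate are exactly those whose $\mybirthday_{\path_s}$ equals $\mybirthday_\path$, the latest possible birthday (recall $\bot > n$, and candidate states are re-discovered-as-part-of-the-current-candidate latest) — and (ii) among candidate states, has been visited the fewest times since the candidate's creation. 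I then need to reconcile ``number of visits to $s^*$ since $\mybirthday_\path$'' with the definition of strength, namely the largest $k$ such that every state of $\candidate(\path)$ occurs at least $k$ times in $s\,\candidatepath$ and the last element occurs at least $k+1$ times, where $\path = \path'\, s\, \candidatepath$ and $\path'$ is the shortest prefix with $\candidate(\path' s) = \candidate(\path)$. The point is that $\mybirthday_\path$ marks exactly the end of $\path' s$, so counting visits to a candidate state after $\mybirthday_\path$ is counting its occurrences in $\candidatepath$ (plus handling of the boundary state $s$). A careful bookkeeping of whether $\myvisits$ counts occurrences including or excluding the occurrence at position $\mybirthday_\path$, and the asymmetric ``$k$ vs $k+1$'' condition on the last element, is what makes $v$ come out equal to $\strength(\path)$ rather than off by one.

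The main obstacle I anticipate is precisely this last reconciliation: matching the slightly intricate definition of strength (with its ``start counting only after the candidate is discovered'' convention and the special treatment of the last visited state, which must be exited one extra time) against the definition of $\myvisits_\path(s) = (\mybirthday_{\path_s}, v)$ with $v$ = number of visits since the birthday. One has to check that the state achieving the minimum under $\preceq$ is the ``bottleneck'' state of the candidate — the one that has been exited the fewest times — and that its visit count, as recorded by $\myvisits$, coincides exactly with $\strength(\path)$ including the boundary cases (e.g.\ the entries in the worked example table where $\strength$ jumps from $0$ to $1$ only after the candidate's first state is revisited). I would verify this by tracing through the same examples used in the text (the $\{p_0,p_1\}$ candidate rows) to pin down the off-by-one conventions, and then give the general argument. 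Everything else — the SCC characterization, the $\bot$ case, and the claim that the lemma is ``an immediate consequence of the definitions'' — is genuinely routine once the definitions are laid side by side.
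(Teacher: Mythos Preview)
Your proposal is correct and matches the paper's approach: the paper gives no proof at all for this lemma, simply stating that it ``is an immediate consequence of the definitions.'' Your plan is precisely to unpack those definitions, which is what the paper is implicitly asking the reader to do; your identification of the off-by-one bookkeeping in the strength computation as the only genuinely fiddly point is accurate, and your argument sketch for the SCC claim (that any state discovered strictly between consecutive roots must lie in the SCC of the earlier root, via the walk structure of $G_\path$) is sound.
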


By the lemma, in order to efficiently implement \textsc{Monitor} it suffices to maintain $R_\path$, $S_\path$, and 
a mapping $\myvisits_\path$ that assigns $\myvisits_\path(s)$ to each state $s$ of $\path$.
More precisely, assume that \textsc{Monitor} has computed so far a path $\path$ leading to a state $s$, and now it extends $\pi$ 
to $\path' = \path \cdot s'$ by traversing a transition $s \rightarrow s'$ of $\mathcal{M}\otimes \mathcal{A}$;
it suffices to compute $R_{\path'}$, $S_{\path'}$ and  $\myvisits_{\path'}$
from $R_\path$, $S_\path$, and $\myvisits_{\path'}$ in 
$O(\log n)$ amortized time, where $n$ is the length of $\path$. We first show how to update 
$R_\path$, $S_\path$, and $\strength(S_\path)$, and then we describe data structures to maintain them in $O(n \log n)$ amortized time. We consider three cases:
\begin{itemize}
\item $s' \notin V_\path$. That is, the monitor discovers the state $s'$ by traversing $s \rightarrow s'$. Then the SCCs of $G_{\path'}$ are the 
SCCs of $G_\path$, plus a new trivial SCC containing only $s$, with $s$ as root. So $R_{\path'} = R_\path \cdot s$, $S_{\path'} = S_\path \cdot \{s\}$. Since $s'$ has just been discovered, there is no candidate, and so  $\myvisits_{\path'}(s') = (\bot, 0)$, because .
\item $s' \in V_\path$, and $d_\path(s) \leq d_\path(s')$. That is, the monitor had already discovered $s'$, and it had discovered it after $s$. 
Then  $G_{\path'} = (V_\path, E_\path \cup \{(s, s')\}$, but the SCCs of $G_\path$ and $G_{\path'}$ coincide, and so 
$R_{\path'} = R_\path$, $S_{\path'} = S_\path$, and $\strength(S_{\path'}) = \min \{ \#_{\path'}(s'), \strength(S_{\path})\}$.
\item $s' \in V_\path$, and $d_\path(s) > d_\path(s')$. That is, the monitor discovered $s'$ before $s$. Let $R_\path = r_1 \, r_2 \cdots r_m$ and 
let $r_i$ be the root of $\scc_\path(s')$. Then $G_{\path'}$ has a path 
$$r_i \trans{*} r_{i+1} \trans{*} \cdots \trans{*} r_{m-1} \trans{*} r_m \trans{*} s \trans{} s' \trans{*} r_i \ .$$
So we have 
\begin{align*}
R_{\path'} & = r_1 \, r_2 \cdots r_i \\
S_{\path'} & = S_\path(r_1) \cdots S_\path(r_{i-1}) \, \left(\bigcup_{j = i}^m S_{\path}(r_j) \right) 
\end{align*}
Moreover, $\candidate(\path') = S_{\path'}$, because  we have discovered a new candidate. 
Since the strength of a just discovered candidate is $0$ by definition, we set $\strength(\path')=0$. 
\end{itemize}

In order to efficiently update $R_\path$, $S_\path$  and $\min_\path$ we represent them using the following data structures.
\begin{itemize}
\item The number $N$ of different states visited so far.
\item A hash map $D$ that assigns to each state $s$ discovered by $\path$ its discovery index. When
$s$ is visited for the first time, $D(s)$ is set to $N+1$. Subsequent lookups return $N+1$.
\item A structure $R$ containing the identifiers of the roots of $R_\path$, and supporting the following operations in
amortized $O(\log n)$ time: $\textsc{insert}(r)$, which inserts the identifier of $r$ in $R$; $\textsc{extract-max}$, which returns the largest identifier
in $R$;  and $\textsc{find}(s)$, which returns the largest identifier of $R$ smaller than or equal to the identifier of $s$. (This is the identifier of the root of
the SCC containing $s$.) For example, this is achieved by implementing $R$ both as a search tree and a heap. 
\item For each root $r$ a structure $S(r)$ that assigns to each state of $s \in S(r)$ the value $\myvisits_\path(s)$,
and supports the following operations in amortized $O(\log n)$ time: $\textsc{find-min}$, which returns the minimum value of the states of $S(r)$; $\textsc{increment-key}(s)$, which increases the value of $s$ by 1, and $\textsc{merge}$, which returns the union of two given maps.
For example, this is achieved by implementing $S(r)$ as a Fibonacci heap. 
\end{itemize}

When the algorithm explores an edge $s \rightarrow s'$ of the Markov chain $\mathcal{M}\otimes \mathcal{A}$, it
updates these data structures as follows. The algorithm first computes $D(s')$, and then proceeds according to the three cases above:
\begin{itemize}
\item[(1)] If $s'$ had not been visited before (i.e., $D(s') = N+1$), then the algorithm sets $N := N+1$, inserts $D(s')$ in $R$, and creates a new Fibonacci heap $S(s')$ containing only the state $s'$ with key 1.
\item[(2)] If $s'$ had been visited before (i.e., $D(s') \leq  N$), and $D(s) \leq D(s')$, then the algorithm executes
$\textsc{find}(s)$ to find the root $r$ of the SCC containing $s$,  and then increments the key of $s$ in $S(r)$ by 1.
\item[(3)] If $s'$ had been visited before (i.e., $D(s') \leq  N$), and  $D(s) > D(s')$, then the algorithm 
executes the following pseudocode, where $\sigma$ is an auxiliary Fibonacci heap:
\begin{center}
\begin{minipage}{5cm}
\begin{algorithmic}[1]
\State $\sigma \gets \emptyset$
\Repeat 
\State  $r \gets \textsc{extract-max}(R)$
\State $\sigma \gets \textsc{merge}(\sigma, S(r))$ 
\Until{$D(r) \leq  D(s')$}  
\State $\textsc{insert}(r, R)$
\end{algorithmic}
\end{minipage}
\end{center}
\end{itemize}
At every moment in time the current candidate is the set $S(r)$, where $r = \textsc{extract-max}(R)$, and its strength can be obtained from $\textsc{find-min}(S(r))$.

Let us now examine the amortized runtime of the implementation.  Let $n_1, n_2, n_3$ be the number of steps executed by the algorithm corresponding to the cases (1), (2), and (3) above.  In cases (1) and (2), the algorithm executes a constant
number of heap operations per step, and so it takes $O((n_1 + n_2) \log n)$ amortized time for all steps together. 
This is no longer so for case (3) steps. For example, if the Marvov chain is a big elementary circuit $s_0 \trans{} s_1 \trans{} \cdots \trans{} s_{n-1} \trans{} s_0$, then at each step but the last one we insert one state into the heap, and at the last step we extract them all; that is, the last step takes $O(n)$ heap operations. However,  observe that each state is inserted in the heap exactly once, when it is discovered, and extracted at most once. So the algorithm executes at most $n$
 \textsc{extract-max} and \textsc{merge} heap operations for all case (3) steps together, and the amortized time over all of them is $O(n_3 \log n)$. This gives an overall amortized runtime of $O(n \log n)$.

%The pseudocode is shown in 
%
%\begin{algorithm}[ht]
%  \caption{\textsc{ImpMonitor}}
%  \label{alg:ltl}
%\begin{algorithmic}[1]
%\Require Markov Chain $\mathcal{M}\otimes \mathcal{A}$
%\State $j \gets 0$ \Comment{Initialize iteration counter}
%\While {\textbf{true}}
%\State $j \gets j+1$ \Comment{Update iteration counter}
%\State $\path \gets \lambda$ \Comment{Initialize path}
%\State $C \gets \bot$, $i \gets 0$  \Comment{Initialize candidate and candidate counter}
%\State $N= 0$, $R = \epsilon$, $H = \bot$
%\Repeat
%\State $s \gets \nextState$   
%\If {$s \notin H$} $N \gets N+1$; $D(s) \gets N+1$; push $(s, R)$; $S(s) \gets \{ (s, 1) \}$; 
%\Else $\textit{OldC} \gets C$, $C \gets \candidate(\path)$
%\Comment{Update candidate}
%\If {$\bot \neq C\neq \textit{OldC}$} $i \gets i+1$ \Comment{Update candidate counter} \EndIf 
%\Until {$\textsc{Bad}(C)$ \textbf{and} $\strength(\path) \geq  \alpha(i+j)$}
%\EndWhile
%\end{algorithmic}
%\end{algorithm}

\section{Experimental results}

In this section, we illustrate the behaviour of our monitors on concrete models from the PRISM Benchmark Suite \cite{DBLP:conf/qest/KwiatkowskaNP12} and compare the obtained theoretical bounds to the experimental values.
To this end, we have re-used the code provided in \cite{DacaHKP17}, which in turn is based on the PRISM  model checker \cite{prism}.
Whenever the obtained candidate is actually an accepting BSCC, we have a guarantee that no restart will ever happen and we terminate the experiment.

Table~\ref{tab:exper} shows the results on several models.
For the bluetooth benchmark, the optimal number of restarts is $1/0.2=5$ and with $\varepsilon=0.5$ it should be smaller than 10. 
We see that while the bold monitor required $6.6$ on average, the cautious one indeed required a bit more.
For Hermann's stabilization protocol, almost all sufficiently long runs have a good candidate.
In this model, we have not even encountered any bad candidate on the way. 
This can be easily explained since only 38 states out of the half million are outside of the single BSCC. 
They are spread over 9 non-bottom SCCs and some states are transient; however, no runs got stuck in any of the tiny SCCs.
A similar situation occurs for the case of gridworld, where on average every tenth run is non-satisfying. 
However, in our five repetitions (each with a single satisfying run), we have not encountered any.
Finally,  we could not determine the satisfaction probability in crowds since PRISM times out on this model with more than two million states.
However, one can still see the bold monitor requiring slightly less resets than the cautious one, predicting $\pvarphi$ to be in the middle of the $[0,1]$-range.
It is also worth mentioning that the large size preventing a rigorous numeric analysis of the system did not prevent our monitors from determining satisfaction on single runs.

\newcommand{\gf}{\G\F}
\newcommand{\fg}{\F\G}
\begin{table}[t]
	\caption{Experimental comparison of the monitors, showing the average number of restarts and average length of a restarted run. The average is taken over five runs of the algorithm and $\varepsilon=0.5$.}
	\label{tab:exper}
\begin{tabular}{llr@{\hskip 2mm}lr@{\hskip 4mm}lrr}
Model &Property &States&$\pmin$&$\pvarphi$&Monitor&Avg. $R$&Avg. $\frac T R$\\\hline
bluetooth&\gf&143,291&$0.008$&0.20&Cautious &9.0 & 4578\\
&&&&&Bold&6.6&3758\\
hermann&\fg&524,288&$1.9\cdot 10^{-6}$&1&Cautious&0&-\\
&&&&&Bold&0&-\\
gridworld&$\gf\to\fg$& 309,327&0.001& 0.91& Cautious & 0 & -\\
&&&&&Bold&0&-\\
crowds& \fg&2,464,168& 0.066&?&Cautious& 0.8 & 63\\
&&&&&Bold&0.6&90\\
\end{tabular}
\end{table}
\section{Conclusions}
We have shown that monitoring of arbitrary $\omega$-regular properties is possible for finite-state Markov chains, even if the monitor has no information \emph{at all} about the chain, its probabilities, or its structure. More precisely, we have exhibited
monitors  that ``force'' the chain to execute runs satisfying a given property $\varphi$ (with probability 1). The monitors reset the chain whenever the current run is suspect of not satisfying $\varphi$. They work even if $\varphi$ is a liveness property without any ``good prefix'', i.e., a prefix after which any extension satisfies $\varphi$.

Unsurprisingly, the worst-case behaviour of the monitor, measured as the number of steps until the last reset, 
is bad when the probability of the runs satisfying $\varphi$ or the minimal probability of the transitions of the chain are very small, or when the strongly connected components of the chain are very large. We have given performance estimates that quantify the relative weight of each of these parameters.  The design of dedicated monitors that exploit information on this parameters is an interesting topic for future research.

\myspace
\bibliographystyle{plain}
\bibliography{ref}

\newpage
\appendix
\section{Technical Proofs}

\subsection{Proof of Lemma~\ref{lem:bold}}\label{app:L2}

\newcommand{\sat}{\candidate_\infty\models\varphi}
\newcommand{\myexit}{t\to t'}
	
Let $\bscc$ denote the set of BSCCs of the chain-automaton product and $\scc$ the set of its SCCs.

For a subset $K$ of states of the product, $\skcand k(\candidate)$ denotes the event (random predicate) of $\candidate$ being a candidate with strength at least $k$ on a run of the product.
Further, the ``weak'' version $\kcand k(\candidate)$ denotes the event that $\candidate$ has strength $k$ when counting visits even prior to discovery of $\candidate$, i.e. each state of $\candidate$ has been visited and exited at least $k$ times on a prefix $\path$ of the run with $\candidate(\path)=K$.
Previous work bounds the probability that a non-BSCC can be falsely deemed BSCC based on the high strength it gets.

\begin{lemma}[\cite{DacaHKP17}]%\label{lem:one-cand}
	For every set of states $\candidate\notin\bscc$, and every $s\in \candidate$,  $k\in\Nset$,
	$$
	\pr_s[\kcand k (\candidate)]\leq (1-\pmin)^k\,.
	$$
\end{lemma}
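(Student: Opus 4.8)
The plan is to isolate a single ``escape'' transition out of $\candidate$ and argue that the weak-candidate event forces the run to decline this transition $k$ times. First I would use the hypothesis $\candidate \notin \bscc$: since $\candidate$ is not a BSCC of the product $\Mcdra$, there must be a state $t \in \candidate$ and a state $t' \notin \candidate$ with $\Pm'(t,t') > 0$, and hence $\Pm'(t,t') \geq \pmin$. Call $t \to t'$ the escape transition; its existence is the only place the assumption $\candidate \notin \bscc$ is used.

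Next I would unfold what it means for $\candidate$ to be the candidate of a finite prefix $\path$ of the run. By definition $\candidate = \support{\candidatepath}$ for a suffix $\candidatepath$ of $\path$ with $G_\candidatepath$ a BSCC of $G_\path$. The crucial consequence is that $G_\path$ contains \emph{no} edge leaving $\candidate$: every transition taken anywhere along $\path$ that departs from a state of $\candidate$ stays inside $\candidate$ (this holds even for visits made before $\candidate$ was discovered, since the edges of $G_\path$ are precisely the transitions taken in $\path$). In particular the escape transition $t \to t'$ has never been taken in $\path$. Now the event $\kcand k(\candidate)$ asks, in the weak sense, for some prefix $\path$ with $\candidate(\path) = \candidate$ in which every state of $\candidate$—and in particular $t$—has been exited at least $k$ times. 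Combining the two facts, on $\kcand k(\candidate)$ the run departs from $t$ at least $k$ times within such a $\path$, and each such departure goes to a state of $\candidate$, never to $t'$. Hence $\kcand k(\candidate)$ is contained in the event $A_k$ that the first $k$ transitions taken out of $t$ all avoid $t'$.

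Finally I would bound $\pr_s[A_k]$ by the strong Markov property. Writing $\tau_1 < \tau_2 < \cdots$ for the successive times at which the run sits at $t$, the transition taken at time $\tau_j$ equals $t \to t'$ with probability $\Pm'(t,t') \geq \pmin$ conditionally on the history up to $\tau_j$, independently of that history. Thus each of the first $k$ departures avoids $t'$ with conditional probability at most $1 - \pmin$, and multiplying over $j = 1, \dots, k$ yields $\pr_s[A_k] \leq (1-\pmin)^k$. Therefore $\pr_s[\kcand k(\candidate)] \leq \pr_s[A_k] \leq (1-\pmin)^k$; for $k = 0$ the bound is trivially $1$.

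The step I expect to be most delicate is the containment $\kcand k(\candidate) \subseteq A_k$: one must verify that ``weak strength at least $k$'' genuinely guarantees $k$ completed departures from $t$ occurring while $\candidate$ is still the candidate, so that the count defining $A_k$ is well-defined on every run in the event, and that the BSCC property of $G_\path$ rules out the escape transition among these departures. Once this set inclusion is established, the probabilistic estimate is a routine application of conditioning on the return times to $t$.
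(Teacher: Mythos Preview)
Your proposal is correct and follows essentially the same approach as the paper's proof: identify an escape edge $t\to t'$ out of $\candidate$, observe that the weak-candidate event forces at least $k$ departures from $t$ none of which take this edge, and bound the probability by $(1-\pmin)^k$. Your write-up is more explicit about why no edge of $G_\path$ leaves $\candidate$ (via the BSCC property of $G_\candidatepath$ in $G_\path$) and about the use of the strong Markov property at the return times to $t$, but these are elaborations of the same argument rather than a different route.
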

\begin{proof}
	Since $\candidate$ is not a BSCC, there is a state $t\in\candidate$ with a transition to $t'\notin\candidate$.
	The set of states $\candidate$ becomes a $k$-candidate of a run starting from $s$, only if $t$ is visited at least $k$ times by the path and was never followed by $t'$ (indeed, even if $t$ is the last state in the path, by definition of a $k$-candidate, there are also at least $k$ previous occurrences of $t$ in the path). 
	Further, since the transition from $t$ to $t'$ has probability at least $\pmin$, the probability of not taking the transition $k$ times is at most $(1-\pmin)^k$. 
\end{proof}

In contrast to \cite{DacaHKP17}, we need to focus on runs where $\varphi$ is satisfied.
For clarity of notation, we let $\candidate\models\varphi$ denote that $\candidate$ is good, and $\candidate\not\models\varphi$ denote that $\candidate$ is bad.
In particular, $\sat$ %$\candidate_\infty\models\varphi$ 
denotes the event that the run satisfies $\varphi$.

\begin{lemma}\label{lem:one-cand}
	For every set of states $\candidate\notin\bscc$, and every $s\in \candidate$,  $k\in\Nset$,
	$$
	\pr_s[\kcand k (\candidate)\mid \sat]\leq (1-\pmin)^k\,.
	$$
\end{lemma}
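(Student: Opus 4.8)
The plan is to relate the conditional probability $\pr_s[\kcand{k}(\candidate) \mid \sat]$ back to the unconditional bound $\pr_s[\kcand{k}(\candidate)] \le (1-\pmin)^k$ from the previous lemma (the \cite{DacaHKP17} lemma). The naive inequality $\pr[A \mid B] \le \pr[A]/\pr[B]$ is useless here because $\pr[B] = \pvarphi$ can be tiny, so the factor would blow up. Instead, the key idea is that on the event $\kcand{k}(\candidate)$ with $\candidate \notin \bscc$, there is a witnessing exit transition $\myexit$ with $t \in \candidate$, $t' \notin \candidate$, and the bad event is ``$t$ visited at least $k$ times, never followed by $t'$''. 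I would condition on the history up to the $k$-th visit to $t$ (along a run that keeps $\candidate$ as its current candidate): from that state $t$, the future is governed by the same Markov chain, and crucially the event $\sat$ depends only on the final BSCC, which is reached in the future.

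The main step is a coupling/comparison argument at each of the first $k$ visits to $t$. Let me denote by $p := \Pm(t,t') \ge \pmin$ the exit probability. At each visit to $t$, the run either takes $\myexit$ (probability $p$) or not (probability $1-p$). I want to show that conditioning on $\sat$ does not increase the probability of ``not taking $\myexit$'' at any such visit. The clean way: show $\pr_t[\sat \mid \text{take } \myexit \text{ now}] \ge \pr_t[\sat \mid \text{don't take } \myexit \text{ now}]$ is \emph{not} true in general; rather I should argue the reverse conditioning is what I need. Actually the correct route is: $\pr_s[\kcand{k}(\candidate) \wedge \sat] = \pr_s[\kcand{k}(\candidate) \wedge \sat]$, and I bound this by $\le (1-\pmin)^k \cdot \pr[\sat \mid \text{something}]$. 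Concretely, decompose a run realizing $\kcand{k}(\candidate)$: it has a prefix reaching the first of the $k$ documented visits to $t$, then between consecutive visits to $t$ it stays in $\candidate$ without using $\myexit$, and after the $k$-th visit it eventually leaves and determines whether $\sat$. The probability of the ``$k$ returns to $t$ avoiding $\myexit$'' portion is at most $(1-p)^k \le (1-\pmin)^k$, and this portion is probabilistically independent of the tail that decides $\sat$ once we fix the state ($=t$) at which the tail begins — but we must divide by $\pvarphi$ for the conditioning, which reintroduces the bad factor.

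So the actual argument must be more subtle: I expect the intended proof shows that the event $\kcand{k}(\candidate)$ is \emph{negatively correlated} with $\sat$, or more precisely that $\pr_s[\kcand{k}(\candidate) \mid \sat] \le \pr_s[\kcand{k}(\candidate) \mid \text{some event implied by the structure}]$. The cleanest formulation: for each $j = 1, \ldots, k$, at the $j$-th visit to $t$, conditioned on all that happened before \emph{and} on $\sat$, the probability of not exiting via $\myexit$ is still at most $1-\pmin$. This holds because $\sat$ is determined by the future BSCC; taking $\myexit$ leads out of $\candidate$ but from $t'$ one can still reach any BSCC reachable from $t$ (in a finite chain every BSCC reachable from $t$ is reachable from $t'$? — no, that is false). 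The real point is different: conditioning on $\sat$ only \emph{restricts} which continuations are allowed, and among continuations from $t$, the sub-event ``exit via $\myexit$ at this visit'' and ``stay'' each retain their relative proportions weighted by $\pr[\sat \mid \cdot]$; one shows $\pr_t[\sat] = p\cdot\pr_{t'}[\sat] + (1-p)\cdot\pr_t[\text{return to }t, \text{ then }\sat] + \cdots$ and uses that $\pr_{t'}[\sat] > 0$ whenever $\pr_t[\sat]>0$ is \emph{not} needed — rather monotonicity: removing the continuations through $\myexit$ cannot increase $\pr[\sat]$ beyond... Honestly, the hard part is exactly this: making precise why the conditioning factor telescopes to $1$ rather than $1/\pvarphi$. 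I would set it up as an induction on $k$: write $\pr_s[\kcand{k}(\candidate) \mid \sat]$ by conditioning on the trajectory up to the first return to $t$, apply the inductive hypothesis for $k-1$ from state $t$, and verify that the extra $(1-\pmin)$ factor for the last return survives the conditioning because $\pr_t[\cdot \mid \sat]$ from $t$ still assigns at most $1-\pmin$ to ``next visit to $t$ without using $\myexit$'' — this last claim being the genuine obstacle, provable because the $\myexit$ transition and the events inside $\candidate$ are in a ``race'' and $\sat$ does not favour losing the race. I would present the induction cleanly and isolate that single-step claim as the crux.
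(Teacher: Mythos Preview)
Your proposal has a genuine gap. You correctly identify that the crux is showing the conditioning on $\sat$ does not inflate the $(1-\pmin)^k$ bound, but the route you commit to---induction on $k$ with a \emph{fixed} witness exit $\myexit$, and the single-step claim ``conditioned on the past and on $\sat$, the probability of not exiting via $\myexit$ at this visit is still at most $1-\pmin$''---does not go through. That claim can simply fail: if from $t'$ no good BSCC is reachable while from some other exit of~$\candidate$ a good BSCC \emph{is} reachable, then conditioning on $\sat$ drives the probability of taking $\myexit$ to zero, so the conditional probability of \emph{not} taking it is~$1$, not $\le 1-\pmin$. Fixing one witness edge in advance is precisely the mistake; the conditioning on $\sat$ can heavily bias which exit is used.

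The paper avoids this by decomposing over the \emph{actual} exit transition $\myexit$ that the run eventually takes out of~$\candidate$ (such an exit exists a.s.\ since $\candidate\notin\bscc$). By the Markov property applied at the exit, the event $\sat$ depends only on the continuation from~$t'$, so
\[
\pr_s\big[\kcand k(\candidate),\ \text{exit via }\myexit,\ \sat\big]\;=\;\pr_s\big[\kcand k(\candidate),\ \text{exit via }\myexit\big]\cdot\pr_{t'}[\sat].
\]
On the joint event on the left, the state~$t$ was visited at least $k$ times without taking $\myexit$, which contributes the $(1-\pmin)^k$ factor unconditionally. After factoring that out, what remains is (up to the bound) $\sum_{t'}\pr_s[\text{first state outside }\candidate\text{ is }t']\cdot\pr_{t'}[\sat]$, and by the Markov property this sum is exactly $\pr_s[\sat]$. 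That is the telescoping you were searching for: summing over \emph{all} exits, each weighted by $\pr_{t'}[\sat]$, reconstructs the denominator and cancels it. No per-step monotonicity under conditioning is needed, and no single exit has to behave well on its own.
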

\begin{proof}
	The previous argument applies also in the case where we assume that after this strength is reached the run continues in any concrete way (also satisfying $\varphi$) due to the Markovian nature of the product:

	\begin{align*}
	  &\pr_s[\kcand k (\candidate)\mid \sat]\\
	=& \sum_{\myexit}\pr_s[\kcand k (\candidate), K\text{ exited by }\myexit\mid \sat]\\
	=& \sum_{\myexit}\pr_s[\kcand k (\candidate), K\text{ exited by }\myexit, \sat]/\pr_s[ \sat]\\
	=& \sum_{\myexit}\pr_s[\kcand k (\candidate), K\text{ exited by }\myexit] \cdot \pr_s[\sat\mid \kcand k (\candidate), K\text{ exited by }\myexit] /\pr_s[ \sat]\\
	\stackrel{(1)}=& \sum_{\myexit}\pr_s[\kcand k (\candidate), K\text{ exited by }\myexit] \cdot\pr_s[\sat\mid K\text{ exited by }\myexit] /\pr_s[\sat]\\
	\stackrel{(2)}=& \sum_{\myexit}\pr_s[\kcand k (\candidate), K\text{ exited by }\myexit] \cdot\pr_{t'}[\sat] /\pr_s[ \sat]\\
	\leq&\sum_{\myexit\text{ exiting }K}\pr_s[\text{reach }t]\pr_t[\text{not take }\myexit\text{in }k\text{ visits of }t]\cdot\Pm(t,t') \cdot\pr_{t'}[\sat] /\pr_s[ \sat] \\
	=&\sum_{\myexit\text{ exiting }K}\pr_t[\text{not take }\myexit\text{in }k\text{ visits of }t] \pr_s[\text{reach }t]\cdot\Pm(t,t')\cdot\pr_{t'}[\sat] /\pr_s[ \sat] \\
	\leq&\sum_{\myexit\text{ exiting }K}(1-\pmin)^k\pr_s[\text{reach $t'$ as the first state outside $K$}] \cdot\pr_{t'}[\sat] /\pr_s[ \sat]\\
	=& (1-\pmin)^k \pr_s[\sat] /\pr_s[ \sat]\\
	=&(1-\pmin)^k
	\end{align*}

\noindent where (1) follows by the Markov property and by a.s. $\candidate\neq\candidate_\infty$, (2) by the Markov property.
\end{proof}

In the next lemma, we lift the results from fixed designated candidates to arbitrary discovered candidates, at the expense of requiring the (strong version of) strength instead of only the weak strength.
To that end, let \emph{birthday} $b_i$ be the moment when $i$th candidate on a run is discovered, i.e., a run is split into $\run=\path b_i\run'$ so that $\candidate_i=\candidate(\path b_i)\neq\candidate(\path)$.
In other terms, $b_i$ is the moment we start counting the occurences for the strength, whereas the weak strength is already 1 there.

%\begin{lemma}[known]\label{lem:ith-cand}
%	For every $i,k\in\Nset$, we have $$\pr[\skcand k(\candidate_i) \mid \candidate_i\notin\bscc]\leq (1-\pmin)^k\,.$$ 
%	%\todoprzemek{$\pr[\candidate_i\in\scs\setminus \bscc,b_i\in K_i]=\pr[\candidate_i\notin\bscc]$ instead?}
%\end{lemma}
%\begin{proof}
%	\begin{align*}
%	&\pr[\skcand k(\candidate_i) \mid \candidate_i\notin\bscc]\\
%	&= \frac{\pr[\skcand k(\candidate_i), \candidate_i\notin\bscc]}{\pr[\candidate_i\notin\bscc]}\\
%	&= \frac1{\pr[\candidate_i\notin\bscc]} \sum_{\substack{C\in\scs\setminus\bscc\\s\in C}}\pr[\skcand k(C),\candidate_i=C,b_i=s]\\
%	&= \frac1{\pr[\candidate_i\notin\bscc]} \sum_{\substack{C\in\scs\setminus\bscc\\s\in C}}\pr[\candidate_i=C,b_i=s]\pr_s[\kcand k (C)] \tag{by Markov property}\\
%	&\leq \frac1{\pr[\candidate_i\notin\bscc]} \sum_{\substack{C\in\scs\setminus\bscc\\s\in C}}\pr[\candidate_i=C,b_i=s](1-\pmin)^k \tag{by Lemma~\ref{lem:one-cand}}\\
%	&= (1-\pmin)^k~\tag{since $\pr[\candidate_i\notin\bscc]=\sum_{\substack{C\in\scs\setminus\bscc\\s\in C}}\pr[\candidate_i=C,b_i=s]$}.\\
%	\end{align*}
%\end{proof}

\begin{lemma}\label{lem:ith-cand}
	For every $i,k\in\Nset$, we have $$\pr[\skcand k(\candidate_i) \mid \candidate_i\notin\bscc,\sat]\leq (1-\pmin)^k\,.$$ 
%	and hence
%	$$\pr[\kcand {k+1}(\candidate_i) \mid \candidate_i\notin\bscc,\sat]\leq (1-\pmin)^k\,.$$\todo{Cand not true, only SCand}
	%\todoprzemek{$\pr[\candidate_i\in\scs\setminus \bscc,b_i\in K_i]=\pr[\candidate_i\notin\bscc]$ instead?}
\end{lemma}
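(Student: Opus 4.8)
The plan is to reduce the statement about an arbitrary discovered candidate $\candidate_i$ to the fixed-candidate bound of Lemma~\ref{lem:one-cand} by summing over all possible identities of the $i$-th candidate and over all possible birthdays. Concretely, for a fixed set of states $K \notin \bscc$ and a fixed finite prefix $\path b_i$ that discovers $K$ as the $i$-th candidate (i.e. $\candidate(\path b_i) = K \neq \candidate(\path)$), I would condition the run on this prefix. After the prefix, by the Markov property of the product, the continuation starting from the state $b_i$ is an independent run of the product; and the event $\skcand k(\candidate_i)$, given that the $i$-th candidate is exactly $K$ with birthday at this prefix, asks precisely that in the continuation the set $K$ is exited at least $k$ times before $K$ ever changes — which is the event $\kcand k(K)$ read from the state $b_i$ (the ``strong'' strength counted after discovery equals the ``weak'' strength of a fresh run started at $b_i$). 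So I would write
\[
\pr[\skcand k(\candidate_i), \candidate_i = K, b_i \text{ at } \path b_i \mid \sat]
\]
and massage it, exactly as in the proof of Lemma~\ref{lem:one-cand}, into $\pr_{b_i}[\kcand k(K) \mid \candidate_\infty \models \varphi]$ weighted by the probability of the prefix.

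The key steps, in order: first, partition the event $\{\skcand k(\candidate_i),\ \candidate_i \notin \bscc,\ \sat\}$ according to the finite prefix $\path b_i$ that first realizes the $i$-th candidate and according to that candidate's value $K \notin \bscc$; these are countably many disjoint events. Second, on each piece apply the same conditional-probability manipulation used in Lemma~\ref{lem:one-cand} — splitting on the exit transition $t \to t'$ that eventually leaves $K$, using the Markov property to replace $\pr[\sat \mid \skcand k(\candidate_i), K \text{ exited by } t\to t']$ by $\pr_{t'}[\candidate_\infty \models \varphi]$, and using that a.s. $\candidate_\infty \neq K$ since $K \notin \bscc$ — to obtain the factor $(1-\pmin)^k$ on each piece. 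Third, sum the pieces back up: since $\sum$ over all prefixes and all $K$ of $\pr[\candidate_i = K, b_i \text{ at } \path b_i, \sat]$ is at most $\pr[\sat]$, the $(1-\pmin)^k$ factors out and the normalization by $\pr[\sat \mid \candidate_i \notin \bscc]$ (after also dividing by $\pr[\candidate_i \notin \bscc]$) yields the claimed bound $(1-\pmin)^k$.

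I expect the main obstacle to be the bookkeeping around conditioning on the birthday prefix while keeping the conditioning on $\varphi$ correct. One has to be careful that, conditioned on the prefix $\path b_i$, the tail event $\skcand k(\candidate_i)$ really is the "restarted" event $\kcand k(K)$ from $b_i$ — this needs the observation that after discovery the strong strength counts exactly the post-birthday exits, and that the tail run is distributed as $\pr_{b_i}$. The second subtlety is that $\pr[\sat]$ and $\pr_{t'}[\sat]$ must both be understood as "the run's final candidate is good", and the factorization $\pr[\sat \mid \skcand k(\candidate_i), K \text{ exited by } t \to t'] = \pr_{t'}[\candidate_\infty \models \varphi]$ uses that once $K$ is exited the future is Markovian from $t'$ and the final candidate depends only on that future. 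Once these two points are set up cleanly, the computation is the same chain of (in)equalities as in Lemma~\ref{lem:one-cand}, just indexed by the extra sum over prefixes.
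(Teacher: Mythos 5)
Your proposal is correct and follows essentially the same route as the paper: partition the event according to the identity $K$ of the $i$-th candidate and its birthday (the paper conditions only on the birthday \emph{state} $s=b_i$ rather than the whole prefix, which the Markov property makes equivalent), observe that the strong strength after discovery is the weak strength of the run restarted at $b_i$, apply Lemma~\ref{lem:one-cand} to each piece, and re-sum so that $\sum_{K\notin\bscc}\pr[\candidate_i=K,b_i,\sat]=\pr[\candidate_i\notin\bscc,\sat]$ cancels the normalizing denominator. The two subtleties you flag (strong vs.\ weak strength at the birthday, and pushing the conditioning on $\sat$ through the Markov factorization) are exactly the points the paper's computation relies on.
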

\begin{proof}
	\begin{align*}
	&\pr[\skcand k(\candidate_i) \mid \candidate_i\notin\bscc,\sat]\\[0.2cm]
	&= \frac{\pr[\skcand k(\candidate_i), \candidate_i\notin\bscc,\sat]}{\pr[\candidate_i\notin\bscc,\sat]}\\
	&= \frac1{\pr[\candidate_i\notin\bscc,\sat]} \sum_{\substack{C\in\scs\setminus\bscc\\s\in C}}\pr[\skcand k(C),\candidate_i=C,b_i=s,\sat]\\
	&= \frac1{\pr[\candidate_i\notin\bscc,\sat]} \sum_{\substack{C\in\scs\setminus\bscc\\s\in C}}\pr[\candidate_i=C,b_i=s]\cdot\pr_s[\kcand k (C),\sat] \\
	&= \frac1{\pr[\candidate_i\notin\bscc,\sat]} \sum_{\substack{C\in\scs\setminus\bscc\\s\in C}}\pr[\candidate_i=C,b_i=s]\cdot\pr_s[\kcand k (C)\mid\sat]\cdot\pr_s[\sat] \\
	&\leq \frac{(1-\pmin)^k }{\pr[\candidate_i\notin\bscc,\sat]} \sum_{\substack{C\in\scs\setminus\bscc\\s\in C}}\pr[\candidate_i=C,b_i=s]\cdot\pr_s[\sat]\tag{by Lemma~\ref{lem:one-cand}}\\
	&\leq \frac{(1-\pmin)^k }{\pr[\candidate_i\notin\bscc,\sat]} \sum_{\substack{C\in\scs\setminus\bscc\\s\in C}}\pr[\candidate_i=C,b_i=s,\sat]\\	
	&= (1-\pmin)^k
	\end{align*}
	with the last equality due to $$\candidate_i\notin\bscc\cap\sat=\biguplus_{\substack{C\in\scs\setminus\bscc\\s\in C}}\candidate_i=C,b_i=s,\sat$$
\end{proof}

The set $\mathcal E\mathit{rr}$ of the next lemma is actually exactly the set considered in Lemma~\ref{lem:bold} but in a more convenient notation for the computation.

\begin{lemma}\label{lem:more-cand}
	For $(k_i)_{i=1}^\infty\in\Nset^\Nset$, let 
	$\mathcal E\mathit{rr}$ be the set of runs such that for some $i\in\Nset$, we have $\skcand {k_i}(\candidate_i)$ despite $\candidate_i\not\models\varphi$ and $\candidate_{\infty}\models\varphi$.
	Then $$\displaystyle\pr[\mathcal E\mathit{rr}]<p_\varphi\sum_{i=1}^\infty (1-\pmin)^{k_i}\,.$$
\end{lemma}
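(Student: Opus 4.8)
The plan is a union bound over the index $i$ of the offending candidate, followed by a per-$i$ estimate supplied by Lemma~\ref{lem:ith-cand}. Write $\mathcal E\mathit{rr}=\bigcup_{i\ge 1}E_i$, where $E_i$ is the set of runs with $\skcand{k_i}(\candidate_i)$, $\candidate_i\not\models\varphi$ and $\candidate_\infty\models\varphi$. It then suffices to show $\pr[E_i]\le(1-\pmin)^{k_i}\,p_\varphi$ for every $i$, with strict inequality for at least one index, since in that case $\pr[\mathcal E\mathit{rr}]\le\sum_{i\ge 1}\pr[E_i]<p_\varphi\sum_{i\ge 1}(1-\pmin)^{k_i}$ (the comparison being trivial if the right-hand sum is infinite).

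The crucial observation is the deterministic inclusion $E_i\subseteq\{\candidate_i\notin\bscc\}$: if $\candidate_i$ were a BSCC of the product, then, since $\candidate_i=\candidate(\path)$ for some prefix $\path$ of the run and $\path$ necessarily ends in a state of $\candidate_i$, the run would never leave $\candidate_i$, forcing $\candidate_\infty=\candidate_i$ and hence $\candidate_\infty\not\models\varphi$ — contradicting $\candidate_\infty\models\varphi$. Therefore $E_i\subseteq\{\skcand{k_i}(\candidate_i)\}\cap\{\candidate_i\notin\bscc\}\cap\{\sat\}$, and (assuming $\pr[\candidate_i\notin\bscc,\sat]>0$, the other case being trivial) Lemma~\ref{lem:ith-cand} yields
\begin{align*}
\pr[E_i]\ &\le\ \pr[\skcand{k_i}(\candidate_i)\mid\candidate_i\notin\bscc,\sat]\cdot\pr[\candidate_i\notin\bscc,\sat]\\
&\le\ (1-\pmin)^{k_i}\cdot\pr[\sat]\ =\ (1-\pmin)^{k_i}\,p_\varphi\,.
\end{align*}
Summing over $i$ gives the non-strict form of the claimed bound.

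For the strict inequality it is enough to produce one index $i$ with $\pr[\candidate_i\notin\bscc,\sat]<p_\varphi$. By Lemma~\ref{lem:strength-increases} a run almost surely gets trapped in a BSCC and hence produces only finitely many distinct candidates, so $\pr[\text{at least }i\text{ candidates occur}]\to 0$ as $i\to\infty$; in particular this probability is below $p_\varphi$ for all sufficiently large $i$. Since $\candidate_i\notin\bscc$ forces at least $i$ candidates to occur (if fewer occur, then $\candidate_i=\candidate_\infty$, which is almost surely a BSCC), we get $\pr[\candidate_i\notin\bscc,\sat]<p_\varphi$ for such $i$, so $\pr[E_i]<(1-\pmin)^{k_i}\,p_\varphi$ for that $i$ while every other summand is still $\le(1-\pmin)^{k_j}\,p_\varphi$; this yields the strict bound. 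The conceptual content is entirely in Lemma~\ref{lem:ith-cand}; the only thing that needs care here is the reduction to its exact conditioning (the inclusion $E_i\subseteq\{\candidate_i\notin\bscc\}$) together with the small extra argument for strictness, and I do not expect a serious obstacle beyond that.
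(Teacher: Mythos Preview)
Your argument is essentially identical to the paper's: write $\mathcal E\mathit{rr}$ as a union over $i$, replace $\candidate_i\not\models\varphi$ by the larger event $\candidate_i\notin\bscc$, factor through the conditioning of Lemma~\ref{lem:ith-cand}, and bound $\pr[\candidate_i\notin\bscc,\sat]$ by $p_\varphi$. You actually go a little further than the paper by making the inclusion $\{\candidate_i\not\models\varphi,\ \sat\}\subseteq\{\candidate_i\notin\bscc\}$ explicit and by supplying an argument for the strict inequality, which the paper's own derivation only establishes as~$\le$.
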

\begin{proof}
	\begin{align*}
	\pr[\mathcal E\mathit{rr}]
	&=\pr\left[\bigcup_{i=1}^{\infty} \Big( \skcand {k_i}(\candidate_i) \cap \candidate_i\not\models\varphi\cap\sat\Big)\right]\\
	&\leq\pr\left[\bigcup_{i=1}^{\infty} \Big( \skcand {k_i}(\candidate_i) \cap \candidate_i\notin\bscc\cap\sat\Big)\right]\\
	&\leq\sum_{i=1}^{\infty}\pr[\skcand {k_i}(\candidate_i)\cap \candidate_i\notin\bscc \cap \sat] \tag{by the union bound}\\
	&=\sum_{i=1}^{\infty}\pr[\skcand {k_i}(\candidate_i) \mid \candidate_i\notin\bscc\cap\sat]\cdot\pr[\candidate_i\notin\bscc\mid\sat]\cdot\pr[\sat]\\
	&\leq \sum_{i=1}^{\infty}\pr[\skcand {k_i}(\candidate_i) \mid \candidate_i\notin\bscc\cap\sat]\cdot 1\cdot p_\varphi\\
	&=p_\varphi\sum_{i=1}^{\infty}(1-\pmin)^{k_i}~. \tag{by Lemma~\ref{lem:ith-cand}}
	\end{align*}	
%	$\candidate_i\not\models\varphi\cap\candidate_{\infty}\models\varphi \subseteq \candidate_i\neq\candidate_\infty\subseteq \candidate_i\notin\bscc$ and conclude by previous lemma.
\end{proof}

%\begin{lemma}[`L.2']
%	For every Markov chain $\Mc$ with minimal probability $\pmin$, for every $\varepsilon>0$: $$\pr \left[ \; \big\{ \run \mid \run \models \varphi \wedge \exists i \geq 1 \, . \, 
%	\candidate_i(\run)\not\models\varphi \wedge \skcand {k_i}(\candidate_i) \big\}  \; \right] \leq p_\varphi \cdot \varepsilon \ .$$
%	where $k_i=(i+\log\varepsilon)/\log(1-\pmin)$.
%\end{lemma}
\begin{proof}[of Lemma~\ref{lem:bold}]
%	$$\pr \left[ \; \big\{ \run \mid \run \models \varphi \wedge \exists i \geq 1 \, . \, 
%	\candidate_i(\run)\not\models\varphi \wedge \skcand {k_i}(\candidate_i) \big\}  \; \right] \leq p_\varphi \cdot \varepsilon \ .$$
%	where $k_i=(i+\log\varepsilon)/\log(1-\pmin)$.
	Recall that Lemma~\ref{lem:bold} claims that 
	$$\pr[\mathcal E\mathit{rr}]\leq \rc{\varepsilon}p_\varphi$$
	for $k_i:=(i\rc{-\log\varepsilon})\cdot\frac{-1}{\log(1-\pmin)}$.
	Directly from previous lemma by plugging in these $k_i$, we obtain  
	$$\pr[\mathcal E\mathit{rr}]\leq p_\varphi\sum_{i=1}^\infty (1-\pmin)^{k_i}=p_\varphi\sum_{i=1}^\infty 2^{-i}\rc{2^{\log\varepsilon}}=p_\varphi \rc{\varepsilon}\,.$$
\end{proof}

\subsection{Proof of Theorem~\ref{thm:performance-known-pmin}}\label{app:performance-known-pmin}

\begin{proof}[of Theorem~\ref{thm:performance-known-pmin}]
Let $I_{ji,k}$ be the number of steps between the $j$-th and $(j+1)$th reset such that the current candidate is $K_i$, and its strength is 
$k$. Observe that for a Markov chain with $n$ states we have $I_{ji,k} = 0$ if $i > n$ or $j > \alpha(i- \log\varepsilon)$. Indeed, if the Markov chain has $n$ states, 
then along the run there are at most $n$ candidates; moreover, the strength of the $K_i$ stays strictly below
$\alpha(i-\log\varepsilon)$, because otherwise the run is aborted.  So we have

\begin{equation}
\label{eq:T}
T = \sum_{j=1}^\infty T_j = \sum_{j=1}^\infty T_{j}^\bot + \sum_{j=1}^\infty T_{j}^C = \sum_{j=1}^\infty T_{j}^\bot + 
\sum_{j=1}^\infty \sum_{i=1}^n   \sum_{k=1}^{\alpha(i- \log\varepsilon)}  I_{ji,k}
\end{equation}

\noindent and so, by linearity of expectations,

\renewcommand{\arraystretch}{1.5}
\begin{equation}
\label{eq:Tbound}
\begin{aligned}
\expected(T)  & =    \expected\left( \sum_{j=1}^{\infty} \left( T_{j}^\bot + \sum_{i=1}^n \sum_{k=1}^{\alpha(i- \log\varepsilon)} I_{ji,k} \right) \right)  \\
                    & =    \expected\left( \sum_{j=1}^{\infty} T_{j}^\bot  \right) + 
                                 \sum_{i=1}^n   \sum_{k=1}^{\alpha(i- \log\varepsilon)}  \sum_{j=1}^{\infty} \expected\left( I_{ji,k} \right)  
\end{aligned}
\end{equation}
Let us bound the first summand. Since  $\candidate(\pi)= \bot$ only holds when the 
last state of $\path$ is visited for the first time, we have $T_j^\bot \leq n$. Moreover, $T_j^\bot = 0$ for every $j \geq R$, the number of resets.
So we get
\begin{equation}
\label{eq:Tbound2}
\expected\left( \sum_{j=1}^{\infty} T_{j}^\bot  \right) \leq \expected(n \cdot R) = n \cdot \expected(R) 
\end{equation}
Consider now the variables $I_{ji,k}$. If $j \geq R$ then $I_{ji,k}=0$ by definition, since there is no $(j+1)$-th reset. Moreover, under the condition
$j < R$ the variables $I_{ji,k}$ and $I_{(j+1)i,k}$ have the same expectation, because they refer to different runs. By Theorem \ref{thm:bold}(a)  $R$ is geometrically distributed
with parameter at least $\pvarphi(1-\varepsilon)$, and so we get
\begin{equation}
\label{eq:Tbound3}
\expected(I_{(j+1)i,k}) \le \expected(I_{ji,k}) \cdot (1-\pvarphi(1-\varepsilon))
\end{equation}
Plugging (\ref{eq:Tbound}) and (\ref{eq:Tbound2}) into (\ref{eq:T}), and taking into account that $\expected(R) \leq 1/ \pvarphi (1 - \varepsilon)$, we obtain

\begin{equation}
\label{eq:Tbound4}
\begin{aligned}    
\expected(T)  & \leq \expected(n \cdot R) +    \sum_{i=1}^n   \sum_{k=1}^{\alpha(i- \log\varepsilon)}  \left(\expected(I_{0i,k}) \sum_{j=0}^{\infty}  (1-\pvarphi(1-\varepsilon))^j \right) \\
                    & = n \cdot \expected(R)    +    \sum_{i=1}^n   \sum_{k=1}^{\alpha(i- \log\varepsilon)}  \frac{\expected(I_{0i,k})}{\pvarphi(1- \varepsilon)} \\
                    & \leq  \frac{1}{\pvarphi(1 - \varepsilon)} \left( n + \sum_{i=1}^n   \sum_{k=1}^{\alpha(i- \log\varepsilon)}  \expected(I_{0i,k})\right)
\end{aligned}
\end{equation}
If we can find an upper bound $I \geq \expected(I_{0i,k})$ for every $i, k$, then we finally get:

\begin{equation}
\label{eq:Tbound5}
\begin{aligned}    
\expected(T)  & \leq  \frac{1}{\pvarphi(1 - \varepsilon)}  \cdot n \cdot  \left( 1 +    \alpha (n-  \log\varepsilon)  \cdot I \right) \\
& \leq \frac{1}{\pvarphi(1 - \varepsilon)}  \cdot 2 n \alpha (n-  \log\varepsilon)  I
\end{aligned}
\end{equation}

Before estimating the bound $I$ let us consider the family of chains of Figure \ref{fig:smallscc}, 
and the property $\F p$. In this case the candidates contain only one state, and their strength increases whenever the self-loop
on the state is traversed.  So $\expected(I_{0i,k}) \leq  1$ holds for every $i, k$, and so we can take $I := 1$.
%Moreover, for this family $\pmin = 1/2$ and $\pvarphi=1$. Plugging these values in we obtain $\expected(T) \leq (n / \pvarphi(1 - \varepsilon)) (n + 1 -\log \varepsilon)$.

We now compute a bound $I \geq \expected(I_{0i,k})$ valid for arbitrary chains. Recall that $\expected(I_{0i,k})$ is the number of steps it takes to increase the 
strength of the $i$-th candidate $\candidate_i$ of the $0$-th run from $k$ to $k+1$. This is bounded by the number of steps it takes to visit every state of $\candidate_i$ once.  Let $\maxsizescc \in O(n)$ be the maximal  
size of a SCC. Given any two states $s, s'$ of an SCC, the probability of reaching $s'$ from $s$ after at most $\maxsizescc$ steps is at least $\pmin^{\maxsizescc}$. 
So the expected time it takes to visit every state of an SCC at least once is bounded by $\maxsizescc \cdot \pmin^{-\maxsizescc}$. So taking $I := \maxsizescc \cdot \pmin^{-\maxsizescc}$
we obtain the final result.
\qed
\end{proof}

\end{document}